\newlist{steps}{enumerate}{1}
\setlist[steps, 1]{label = \textbf{Step \arabic*:}}
\newtheorem*{prop*}{Proposition}
\newtheorem{prop}{Proposition}
\newtheorem*{thm*}{Theorem}
\newtheorem{thm}{Theorem}
\newtheorem{lem}{Lemma}
\theoremstyle{definition}
\def\be{ \begin{equation} }
\def\ee{ \end{equation} }
\def\bs{ \begin{split} }
\def\es{ \end{split}}
\def\bea{\begin{eqnarray}}
\def\eea{\end{eqnarray}}
\def\beas{\begin{align*}}
\def\eeas{\end{align*}}
\def\be{\begin{equation}}
\def\ee{\end{equation}}
\def\Aut{{\rm Aut}}
\def\Co0{{\rm Co}_0}
\def\det{{\rm det}}
\def\Fix{{\rm Fix}}
\def\la{\langle}
\def\ra{\rangle}
\newcommand{\cln}{{\,:\,}}
\def\one{{\hbox{ 1\kern-.8mm l}}}
\def\bs{\bar{s}}
\def\CD {{\cal D}}
\def\CE {{\cal E}}
\def\CH {{\cal H}}
\def\CN {{\cal N}}
\def\CE {{\cal E}}
\def\CH {{\cal H}}
\def\CQ {{\cal Q}}
\def\CT {{\cal T}}
\def\IZ{{\mathbb{Z}}}
\title{Conway Subgroup Symmetric Compactifications Redux}
\author[]{Zihni Kaan Baykara,}
\author[]{Jeffrey A. Harvey}
\affiliation[]{Enrico Fermi Institute and Department of Physics \\
$~~ $University of Chicago \\
$~~ $933 East 56th Street, Chicago IL 60637, U.S.A.}
\emailAdd{zkbaykara@uchicago.edu}
\emailAdd{j-harvey@uchicago.edu} 
\abstract{We extend the investigation in \cite{Harvey:2017xdt} of special toroidal compactifications of heterotic string theory for which the half-BPS states provide representations of subgroups of the Conway group. We also explore dual descriptions of these theories and find that they are all linked to either F-theory or type IIA string theory on K3 surfaces with symplectic automorphism groups that are the same Conway subgroups as those of the heterotic dual. The matching with type IIA K3 dual theories includes both the matching of symmetry groups and a comparison between the Narain lattice on the heterotic side and the cohomology lattice on the type IIA side. We present twelve examples where we can identify a type IIA dual K3 orbifold theory as the dual description of the heterotic theory. In addition, we include a Mathematica package that performs most of the computations required for these comparisons. 
}
\keywords{Superstring and Heterotic Strings, Discrete Symmetries, Supersymmetry and Duality. }
\begin{document}
\maketitle
\flushbottom

\section{Introduction}\label{sec:Intro}

This paper is concerned with the construction of special toroidal compactifications of heterotic string theory and their dual descriptions. These compactifications are special because they have interesting subgroups of the Conway group that preserve supersymmetry and act on the spectrum of BPS states. This work extends that of \cite{Harvey:2017xdt}, which was motivated by Mathieu moonshine and mainly focused on orbifold constructions leading to dual pairs involving heterotic string on $K3 \times T^2$ and IIA string on Calabi-Yau manifolds. See also \cite{Banerjee:2020szx} where the lattice techniques used in \cite{Harvey:2017xdt} are used to relate hyperk\"ahler isometry groups of K3 manifolds to subgroups of the Conway group.

Our focus here is directly on special toroidal heterotic compactifications and a detailed study of their dual description in terms of type IIA string on K3 sigma models. There are several motivations for this work. First, while duality between heterotic string on $T^4$ and type IIA string on K3 is now well understood on a general level, there are not many examples where a detailed matching has been carried out, particularly for examples with large supersymmetry preserving automorphism groups.
Second, there are conjectures in the literature that exact symmetries of string theory are gauge rather than global symmetries. See the introduction to \cite{Banks:2010zn} for a discussion as well as references to earlier literature and \cite{Harlow:2018tng} for a discussion in the context of AdS/CFT. However, the arguments for these conjectures in the case of finite groups are not very strong and to our knowledge few attempts have been made to demonstrate this in explicit examples, for example by demonstrating that there is a point in moduli space where the finite group is embedded into a continuous gauge group or by constructing the co-dimension two defects required by this claim. The examples constructed here should provide a useful testing ground for these conjectures. 

Following this introduction, section \ref{sec:Construct} provides the necessary lattice theoretic background for construction of special lattices and describes several explicit Narain lattices with Conway subgroup symmetry. We also provide a description of how to use the associated Mathematica package to construct these and other similar lattices. Section \ref{sec:Duals} is concerned with dual descriptions of the these heterotic theories. We make a few brief remarks about F-theory duals and then turn to type IIA duals involving special K3 surfaces. We provide evidence that specific orbifold limits of K3 surfaces arise in the dual descriptions and provide evidence for this based on the study of the cohomology lattice of the dual K3. We also describe a Mathematica package which is useful for computing the lattice $H^{\mathrm{even}}(X, \mathbb{Z})$ for our K3 surfaces $X$. Finally, appendix \ref{app:A} contains additional details on the lattice constructions we use and describes the algorithms used in the Mathematica package associated with this paper, followed up by appendix \ref{app:B} with a demonstration of a lattice construction to show how the data provided by the package is utilized.

\section*{Acknowledgements}
We would like to thank M.~Gaberdiel and R.~Volpato for helpful discussions and correspondence. JH thanks G. Moore and W. Taylor for discussions on the construction of the Narain lattice resulting from the HM44 sublattice of the Leech lattice and we thank the Aspen Center for Physics, which is supported by National Science Foundation grant PHY-1607611, for facilitating these discussions. We acknowledge support from the National Science Foundation\footnote{Any opinions, findings, and conclusions or recommendations expressed in this material are those of the author(s) and do not necessarily reflect the views of the National Science Foundation.} under grant PHY 1520748.

\section{Construction of Narain lattices with Conway subgroup symmetry}\label{sec:Construct}

Consider the heterotic string compactified on the torus $T^{8-d}$. It is known that the heterotic model on $T^{8-d}$ is characterized by an even unimodular lattice $\Gamma$ with signature $(24-d,8-d)$ \cite{Narain:1985jj, Narain:1986am}. The torus models we investigate are those that arise from a lattice $\Gamma$ constructed by gluing together the orthogonal complement of a sublattice $\mathfrak{F}_L$ of the Leech lattice $\Lambda$ with that of an isometric $E_8$ sublattice.

\subsection{Lattice theory review}\label{sec:Lattice Theory}
\subsubsection{Definitions}
First, we present a short review of the relevant definitions from lattice theory. A \emph{lattice} $(\mathfrak{L},\la,\ra)$ is a freely generated abelian group $\mathfrak{L}$ equipped with a bilinear form
\be
\la ,\ra : \mathfrak{L} \times \mathfrak{L} \rightarrow \mathbb{R}
\ee
or equivalently with a \emph{quadratic form}
$$q: \mathfrak{L} \rightarrow \mathbb{R}$$
such that
\be
q(nx)=n^2q(x)\, ,
\ee
\be \label{eq:2.3}
\frac{1}{2}[q(x+y)-q(x)-q(y)] \text{ is a bilinear form.}
\ee
The equivalence between the two definitions is given as follows.\footnote{The definitions of $\la,\ra$ and $q$ could be generalized to arbitrary fields $F$, with the exception of (\ref{eq:2.3}), in which one needs to delete the $\frac{1}{2}$ factor that we kept to make the equivalence of the definitions explicit.}
\begin{equation}\label{eq:2.4}
\begin{split}
\la , \ra \longrightarrow q(x) & := \la x,x \ra \, ,\\
q \longrightarrow \la x,y\ra & := \frac{1}{2}[q(x+y)-q(x)-q(y)] \, .
\end{split}
\end{equation}
We will sometimes suppress the bilinear form or the quadratic form of the lattice $\mathfrak{L}$. We will use $\mathfrak{L}(n)$ to denote the lattice with its quadratic form multiplied by $n$. We will refer to the minimal number of generators that generate the group as the \emph{rank} of the lattice. 

A lattice $\mathfrak{L}$ is \emph{integral} if the bilinear form $\la,\ra$ takes values in $\mathbb{Z}$, and it is \emph{even} if the quadratic form takes values in $2\mathbb{Z}$. Notice that evenness implies integrality by \eqref{eq:2.4}. The \emph{signature} $(n_0,n_-,n_+)$ of the quadratic form $q$ denotes the indices of inertia of the bilinear form considered as a symmetric square matrix on some choice of generators. We will be only considering \emph{nondegenerate} forms, i.e. $n_0=0$. Quadratic forms with $n_-=0$ (resp. $n_+=0$) are \emph{positive definite} (resp. \emph{negative definite}), and those with both $n_- \neq 0$ and $n_+ \neq 0$ are \emph{indefinite}.

An \emph{isometry} $\psi$ between two lattices $(\mathfrak{L},q)$ and $(\mathfrak{L}',q')$ is a group isomorphism such that the following diagram commutes.
\be
\begin{tikzcd}[row sep=tiny]
\mathfrak{L} \arrow[dd, "\psi"] \arrow[rd, "q"] &    \\
                         &  \mathbb{R} \\
\mathfrak{L}' \arrow[ru, "q'"]            &   
\end{tikzcd}
\ee
Isometric lattices are denoted as $\mathfrak{L}\cong \mathfrak{L}'$.

The \emph{dual lattice} $\mathfrak{L}^\vee$ of a lattice $\mathfrak{L}$ is the set of $v$ in the $\mathbb{Q}$-span of $\mathfrak{L}$ such that $\la v,x \ra \in \mathbb{Z}$ for every element $x \in \mathfrak{L}$.\footnote{$\mathbb{Q}$-span of $\mathfrak{L}$ is taken by considering $\mathfrak{L}$ as a $\mathbb{Z}$-module.} The lattice $\mathfrak{L}^\vee$ is endowed with the $\mathbb{Q}$-linear extension of the bilinear form $\la,\ra$ on $\mathfrak{L}$. If the lattice is dual to itself, it is called \emph{unimodular} or \emph{self-dual}.

Notice that when $\mathfrak{L}$ is integral, we have $\mathfrak{L} \subset \mathfrak{L}^\vee$. In fact, $v+\mathfrak{L}\subset \mathfrak{L}^\vee$ for any $v\in \mathfrak{L}^\vee$. In this case, a natural question to ask is how many $v$ there are in $\mathfrak{L}^\vee$ that generate distinct sublattices $v+\mathfrak{L}\subset \mathfrak{L}^\vee$. In other words, how should one glue copies of $\mathfrak{L}$ to get $\mathfrak{L}^\vee$? The answer to this question is given by the \emph{discriminant group}, defined as 
\be
\CD (\mathfrak{L}) :=  \mathfrak{L}^\vee / \mathfrak{L} \, .
\ee
We fix a \emph{glue vector} $r\in\mathfrak{L}^\vee$ from each equivalence class $[r]\in\CD(\mathfrak{L})$. With a choice of glue vectors, we can express $\mathfrak{L}^\vee$ in terms of $\mathfrak{L}$ as
\be\label{eq:2.7}
\mathfrak{L}^\vee = \coprod_{[r] \in \CD (\mathfrak{L})} r+\mathfrak{L}\, .
\ee

Let $\mathfrak{L}$ be an even lattice. 
Then we can endow $\CD (\mathfrak{L})$ with a quadratic form $\bar q$ which descends from that on $\mathfrak{L}^\vee$ by
\begin{align*}
\bar q: \CD (\mathfrak{L}) \rightarrow \mathbb{Q}/2\mathbb{Z}\, ,
\end{align*}
\be
\bar q([v]):= \la v,v\ra \pmod{2} \, ,
\ee
where $[v]$ denotes the equivalence class of $v\in\mathfrak{L}^\vee$ in $\CD(\mathfrak{L})$. Notice that the quadratic form $\bar q$ is well defined, since for $x\in \mathfrak{L}$ and $v\in \mathfrak{L}^\vee$, we have

$$q(v+x)=2\la v,x\ra + q(v) + q(x) \equiv q(v) \pmod{2}$$
by definition of $\mathfrak{L}^\vee$ and evenness of $\mathfrak{L}$.

The \emph{orthogonal complement} of a sublattice $\mathfrak{G} \subset \mathfrak{L}$ is defined as 
\be
\mathfrak{G}^\perp:=\{\, y\in\mathfrak{L} \mid \la y,x \ra =0 \text{ for all }x\in\mathfrak{G} \,\}\, .
\ee
A sublattice $\mathfrak{G}\subset\mathfrak{L}$ is \emph{primitive} if $\mathfrak{L}/\mathfrak{G}$ as an abelian group is free. In other words, if $\mathfrak{L}$ is a rank $d$ lattice, then a primitive rank $k$ sublattice $\mathfrak{G}$ is such that $\mathfrak{L}$ can be generated by $\mathfrak{G}$ and $d-k$ many elements in $\mathfrak{L}-\mathfrak{G}$.

\subsubsection{Useful facts}
The following is a useful characterization of primitive sublattices in even unimodular lattices.
\begin{prop}[\cite{Nikulin}]\label{prop:primitive}
The sublattice $\mathfrak{G}$ of an even unimodular lattice $(\mathfrak{L},q)$ is primitive if and only if there is an isometry
\be\label{eq:2.10}
\bar \psi: (\CD(\mathfrak{G}),\bar q) \rightarrow (\CD(\mathfrak{G}^\perp) , -\bar q)\,.
\ee
\end{prop}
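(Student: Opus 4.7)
The plan is to use the overlattice picture in which $\mathfrak{L}$ sits between $\mathfrak{G}\oplus\mathfrak{G}^\perp$ and its dual. First I would establish the chain of inclusions
\[
\mathfrak{G}\oplus\mathfrak{G}^\perp \;\subset\; \mathfrak{L}=\mathfrak{L}^\vee \;\subset\; \mathfrak{G}^\vee\oplus(\mathfrak{G}^\perp)^\vee,
\]
where the right-hand inclusion comes from decomposing any $v\in\mathfrak{L}$ in $\mathfrak{L}\otimes\mathbb{Q}=(\mathfrak{G}\otimes\mathbb{Q})\oplus(\mathfrak{G}^\perp\otimes\mathbb{Q})$ as $v=v_\mathfrak{G}+v_\perp$; each component lies in the relevant dual because $\langle v,g\rangle\in\mathbb{Z}$ for every $g\in\mathfrak{G}$. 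Quotienting by $\mathfrak{G}\oplus\mathfrak{G}^\perp$ will give an injection $\bar\phi\cln \mathfrak{L}/(\mathfrak{G}\oplus\mathfrak{G}^\perp)\hookrightarrow\CD(\mathfrak{G})\oplus\CD(\mathfrak{G}^\perp)$, whose image I call $H$, with coordinate projections $\pi_1,\pi_2$.

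For the forward direction, I would first observe that $\pi_1$ is automatically injective: if $v_\mathfrak{G}\in\mathfrak{G}$ then $v_\perp=v-v_\mathfrak{G}\in\mathfrak{L}\cap(\mathfrak{G}^\perp\otimes\mathbb{Q})=\mathfrak{G}^\perp$, so $\bar v_\perp=0$ as well. The real content is surjectivity of $\pi_1$, and this is the step where primitivity enters. Since $\mathfrak{L}/\mathfrak{G}$ is free, I can split $\mathfrak{L}=\mathfrak{G}\oplus\mathfrak{G}'$ as abelian groups; given $g^*\in\mathfrak{G}^\vee$, extend the functional $g\mapsto\langle g^*,g\rangle$ from $\mathfrak{G}$ to $\mathfrak{L}$ by requiring it to vanish on $\mathfrak{G}'$. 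Unimodularity of $\mathfrak{L}$ then forces this integer-valued functional to be $\langle \ell,\cdot\rangle$ for some $\ell\in\mathfrak{L}$, and by construction $\ell-g^*\perp\mathfrak{G}$, so $\ell_\mathfrak{G}=g^*$ and $\pi_1([\ell])=[g^*]$. A symmetric argument makes $\pi_2$ a bijection, so setting $\bar\psi:=\pi_2\circ\pi_1^{-1}$ yields a group isomorphism $\CD(\mathfrak{G})\to\CD(\mathfrak{G}^\perp)$. To upgrade this to the claimed isometry I would invoke evenness of $\mathfrak{L}$: for every $v\in\mathfrak{L}$, $q(v)=q(v_\mathfrak{G})+q(v_\perp)\in 2\mathbb{Z}$, so $\bar q(\bar v_\mathfrak{G})=-\bar q(\bar v_\perp)$ in $\mathbb{Q}/2\mathbb{Z}$, which is exactly the statement that $\bar\psi$ intertwines $\bar q$ with $-\bar q$.

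For the converse, I would reduce to the forward direction via the saturation $\mathfrak{G}^{\mathrm{sat}}:=(\mathfrak{G}\otimes\mathbb{Q})\cap\mathfrak{L}$, which is automatically primitive and has $(\mathfrak{G}^{\mathrm{sat}})^\perp=\mathfrak{G}^\perp$. A standard index computation gives $|\CD(\mathfrak{G})|=[\mathfrak{G}^{\mathrm{sat}}\cln\mathfrak{G}]^2\,|\CD(\mathfrak{G}^{\mathrm{sat}})|$; applying the forward direction to $\mathfrak{G}^{\mathrm{sat}}$ yields $|\CD(\mathfrak{G}^{\mathrm{sat}})|=|\CD(\mathfrak{G}^\perp)|$, and since the hypothesized isometry forces $|\CD(\mathfrak{G})|=|\CD(\mathfrak{G}^\perp)|$, we conclude $[\mathfrak{G}^{\mathrm{sat}}\cln\mathfrak{G}]=1$, i.e.\ $\mathfrak{G}$ is primitive.

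The main obstacle I expect is the surjectivity step above: it is precisely the interplay between primitivity (which supplies a $\mathbb{Z}$-linear splitting allowing the extension of a $\mathbb{Z}$-valued functional off $\mathfrak{G}$) and unimodularity (which represents such functionals by honest lattice vectors) that allows every discriminant class to be realized inside $\mathfrak{L}$. Without either hypothesis, $\pi_1$ only embeds $H$ as a strict subgroup of $\CD(\mathfrak{G})$, breaking the desired bijection.
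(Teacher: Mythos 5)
Your proof is correct. Note that the paper itself does not prove this proposition --- it is quoted from Nikulin --- so there is no in-paper argument to compare against; the closest relative is the paper's proof of the Gluing Lemma, which goes in the opposite direction (building a unimodular overlattice from an isometry of discriminant forms), whereas you extract the isometry from a given unimodular overlattice. Your argument is the standard one: the glue group $H=\mathfrak{L}/(\mathfrak{G}\oplus\mathfrak{G}^\perp)\subset\CD(\mathfrak{G})\oplus\CD(\mathfrak{G}^\perp)$ with both projections bijective, surjectivity coming from the splitting $\mathfrak{L}=\mathfrak{G}\oplus\mathfrak{G}'$ supplied by primitivity together with the representability of integral functionals supplied by unimodularity, and the sign flip coming from evenness of $q(v)=q(v_\mathfrak{G})+q(v_\perp)$. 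The converse via the saturation and the index formula $|\CD(\mathfrak{G})|=[\mathfrak{G}^{\mathrm{sat}}\cln\mathfrak{G}]^2\,|\CD(\mathfrak{G}^{\mathrm{sat}})|$ is clean and complete. The only implicit hypothesis worth flagging is that the form restricted to $\mathfrak{G}\otimes\mathbb{Q}$ must be nondegenerate for the orthogonal decomposition $\mathfrak{L}\otimes\mathbb{Q}=(\mathfrak{G}\otimes\mathbb{Q})\oplus(\mathfrak{G}^\perp\otimes\mathbb{Q})$ (and indeed for $\CD(\mathfrak{G})$ to be finite); this is automatic for the definite sublattices of $\Lambda$ and $E_8$ used in the paper, but should be stated if the proposition is read for arbitrary sublattices of an indefinite unimodular lattice.
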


Conversely, one can start with two even lattices and construct an even unimodular lattice by what is called the \emph{gluing construction}. 
\begin{lem}[Gluing Lemma]\label{lem:gluing lemma}
If $(\mathfrak{L}_1,q_1)$ and $(\mathfrak{L}_2,q_2)$ are even lattices with an isometry
\be\label{eq:2.11}
\bar \psi: (\CD(\mathfrak{L}_1),\bar q_1) \rightarrow (\CD(\mathfrak{L}_2) , \bar q_2) \, ,
\ee
then the lattice
\be
\Gamma := \{\,(x,y) \mid \bar \psi([x])=[y]\,\} \subset \mathfrak{L}_1^\vee \oplus \mathfrak{L}_2^\vee
\ee
equipped with the quadratic form
\be
q(x,y):= -q_1(x) + q_2(y)
\ee
is even and unimodular.
\end{lem}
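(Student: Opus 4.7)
The plan is to verify the two defining properties of ``even unimodular'' separately, and the crucial input in both steps is the compatibility of $\bar\psi$ with the discriminant quadratic forms. First I would fix notation by noting the tower
\begin{equation*}
\mathfrak{L}_1 \oplus \mathfrak{L}_2 \;\subset\; \Gamma \;\subset\; \mathfrak{L}_1^\vee \oplus \mathfrak{L}_2^\vee\, ,
\end{equation*}
where $\Gamma$ is obtained as the preimage of the graph of $\bar\psi$ under the projection $\mathfrak{L}_1^\vee \oplus \mathfrak{L}_2^\vee \twoheadrightarrow \CD(\mathfrak{L}_1) \oplus \CD(\mathfrak{L}_2)$. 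Because $\bar\psi$ is a group homomorphism, its graph is a subgroup of $\CD(\mathfrak{L}_1)\oplus\CD(\mathfrak{L}_2)$, so $\Gamma$ is automatically a subgroup, and it is finitely generated and torsion-free since $\mathfrak{L}_1^\vee\oplus \mathfrak{L}_2^\vee$ is.

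Next I would prove evenness. For any $(x,y) \in \Gamma$ with $x \in \mathfrak{L}_1^\vee$, $y \in \mathfrak{L}_2^\vee$, the relation $\bar\psi([x])=[y]$ together with the fact that $\bar\psi$ is an isometry of the discriminant forms gives
\begin{equation*}
\bar q_1([x]) \;=\; \bar q_2([y]) \qquad \text{in } \mathbb{Q}/2\mathbb{Z}.
\end{equation*}
By definition of $\bar q_i$ this means $q_1(x) - q_2(y) \in 2\mathbb{Z}$, and hence
\begin{equation*}
q(x,y) \;=\; -q_1(x) + q_2(y) \;\in\; 2\mathbb{Z}.
\end{equation*}
(Note that the sign flip in $-q_1$ in the statement of the lemma is precisely what is needed because the isometry is with $+\bar q_2$ rather than $-\bar q_2$.) Once the quadratic form is shown to be $2\mathbb{Z}$-valued on $\Gamma$, integrality of the associated bilinear form follows from \eqref{eq:2.4} together with closure of $\Gamma$ under addition, so $\Gamma$ is an even lattice.

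The final and most substantive step is unimodularity, which I would establish by an index count. Since $\bar\psi$ is a bijection, $|\CD(\mathfrak{L}_1)|=|\CD(\mathfrak{L}_2)|=:N$, and the graph of $\bar\psi$ has exactly $N$ elements. Therefore
\begin{equation*}
[\Gamma : \mathfrak{L}_1 \oplus \mathfrak{L}_2] \;=\; N,
\qquad
[\mathfrak{L}_1^\vee \oplus \mathfrak{L}_2^\vee : \Gamma] \;=\; \frac{N^2}{N} \;=\; N.
\end{equation*}
The discriminant (up to sign) of a sublattice scales by the square of the index, so
\begin{equation*}
|{\det}(\Gamma)| \;=\; \frac{|{\det}(\mathfrak{L}_1)|\,|{\det}(\mathfrak{L}_2)|}{[\Gamma:\mathfrak{L}_1\oplus\mathfrak{L}_2]^2} \;=\; \frac{N\cdot N}{N^2} \;=\; 1,
\end{equation*}
which is equivalent to $\Gamma = \Gamma^\vee$. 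The main potential pitfall is the bookkeeping of signs and of the index calculation; in particular one must be careful that the quadratic form $-q_1 \oplus q_2$ on the ambient space $\mathfrak{L}_1^\vee \oplus \mathfrak{L}_2^\vee$ is nondegenerate (so that discriminants are well-defined and the square-of-index formula applies), which is automatic from the nondegeneracy assumed for $q_1$ and $q_2$. Everything else is essentially a verification that the gluing prescription exactly balances the ``defects'' of $\mathfrak{L}_1$ and $\mathfrak{L}_2$ recorded in their discriminant groups.
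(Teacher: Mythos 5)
Your proof is correct, and while the evenness step coincides with the paper's (both reduce $q(x,y)\in 2\mathbb{Z}$ to the identity $\bar q_1([x])=\bar q_2([y])$ supplied by the isometry $\bar\psi$), your unimodularity argument takes a genuinely different route. The paper proves $\Gamma^\vee\subseteq\Gamma$ directly: given $(u,v)\in\Gamma^\vee$, it first shows $u\in\mathfrak{L}_1^\vee$ and $v\in\mathfrak{L}_2^\vee$ by pairing against elements of the form $(x,0)$ and $(0,y)$, and then deduces $\bar\psi([u])=[v]$ from the nondegeneracy of the induced $\mathbb{Q}/\mathbb{Z}$-valued pairing on the discriminant group. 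You instead count: the graph of $\bar\psi$ has $N=|\CD(\mathfrak{L}_1)|$ elements, so $[\Gamma:\mathfrak{L}_1\oplus\mathfrak{L}_2]=N$, and the discriminant-scaling formula $\det(\mathfrak{L}_1\oplus\mathfrak{L}_2)=[\Gamma:\mathfrak{L}_1\oplus\mathfrak{L}_2]^2\det(\Gamma)$ together with $|\det\mathfrak{L}_i|=N$ forces $|\det\Gamma|=1$, which for an integral lattice is equivalent to self-duality. Your route is arguably cleaner in that it sidesteps the (true but not entirely trivial) fact that the discriminant bilinear form is nondegenerate, replacing it with the standard index--discriminant relation and the identity $|\CD(\mathfrak{L})|=|\det\mathfrak{L}|$; the paper's route is more explicit in that it exhibits the equality $\Gamma^\vee=\Gamma$ element by element rather than inferring it from a determinant count. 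Both are complete; just make sure, as you note, that integrality of $\Gamma$ is established before invoking $[\Gamma^\vee:\Gamma]=|\det\Gamma|$, which your ordering of the steps already guarantees.
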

\begin{proof}
Consider the lattices $(\mathfrak{L}_1^\vee,q_1)$ and $(\mathfrak{L}_2^\vee,q_2)$ together with an isometry
\be
\bar \psi: (\CD(\mathfrak{L}_1),\bar q_1) \rightarrow (\CD(\mathfrak{L}_2) , \bar q_2)\,.
\ee
Construct the lattice $\Gamma$ by gluing $\mathfrak{L}_1^\vee$ and $\mathfrak{L}_2^\vee$ along their isometric glue vectors as follows:\footnote{By $\bar \psi(r)$, we denote a choice of a glue vector in each $\bar \psi([r]) \in \mathcal{D}(\mathfrak{L}_2)$.}
\be\label{eq:2.12}
\Gamma := \coprod_{[r]\in \CD (\mathfrak{L}_1)}(r,\bar \psi(r))+\mathfrak{L}_1 \oplus \mathfrak{L}_2 \subset \mathfrak{L}_1^\vee \oplus \mathfrak{L}_2^\vee\,,
\ee
and equip it with the quadratic form
\be
q(x,y):=-q_1(x)+q_2(y)\,.
\ee
We can show that $q$ is even by construction. Choose an arbitrary $(x,y)\in \Gamma$. By \eqref{eq:2.7}, there is a glue vector $r$ such that
\be
(x,y)=(r, \bar \psi(r))+(v_1,v_2)
\ee
with some $v_1 \in\mathfrak{L}_1$, $v_2\in\mathfrak{L}_2$, and $[x]=[r]$. Then,
\be
q(x,y)\equiv -\bar q_1([r]) + \bar q_2( \bar \psi([r])) =0 \pmod{2}\,.
\ee
To show that $\Gamma$ is unimodular, choose $(u,v)\in \Gamma^\vee \subset (\mathbb{Q} \otimes \mathfrak{L}) \oplus (\mathbb{Q} \otimes \mathfrak{L})$. For all $(x,y)\in \Gamma$, we have
\be\label{eq:2.18}
\la (u,v),(x,y) \ra = -\la u,x \ra + \la v,y \ra \in \mathbb{Z}\,,
\ee
where the bilinear product is induced by $q$. Taking $x$ (resp. $y$) to be zero in \eqref{eq:2.18} implies $u\in \mathfrak{L}_1^\vee$ (resp. $y\in\mathfrak{L}_2^\vee$). Therefore, we get $\Gamma^\vee \subset \mathfrak{L}_1^\vee \oplus \mathfrak{L}_2^\vee$. Now we can use the bilinear product taking values in $\mathbb{Q}/\mathbb{Z}$, induced by $\bar q$ on $\CD(\Gamma)$ as
\be
-\la [u],[x] \ra + \la [v],[y] \ra \equiv 0 \pmod{1}\,.
\ee
Since $\bar \psi$ and the quadratic forms commute, we get
\be
-\la \bar \psi([u]),[y] \ra + \la [v],[y] \ra = \la - \bar \psi([u]) + [v] , [y] \ra \equiv 0 \pmod{1}\,.
\ee
We see that $\bar \psi([u])=[v]$ since the bilinear form is nondegenerate. We conclude that $\Gamma^\vee=\Gamma$.
\end{proof}

The classification of even unimodular lattices helps us determine the lattice obtained by the gluing construction. In particular, indefinite even unimodular lattices are unique up to isometry.
\begin{prop}[\cite{Gerstein}]\label{prop:2}
If $\mathfrak{L}$ is an indefinite even unimodular lattice with signature $(n_-,n_+)$, then $n_+ - n_- \equiv 0 \pmod{8}$, and
\be
\mathfrak{L} \cong E_8(\pm1)^{\oplus \frac{|n_+-n_-|}{8}} \oplus U^{\oplus \min(n_+,n_-)} =: \mathrm{II}^{n_-,n_+}\,,
\ee
where $(\pm 1)=\mathrm{sign}(n_+ -n_-)$, and $U$ is the hyperbolic lattice with the bilinear form $\begin{pmatrix}
0 & 1 \\
1 & 0
\end{pmatrix}$ defined on some choice of generators that we call the standard basis of $U$.
\end{prop}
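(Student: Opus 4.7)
The plan is to split the statement into the signature divisibility $n_+ - n_- \equiv 0 \pmod{8}$ and the isometry classification, and to handle each with a standard tool from lattice theory.

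First I would establish the divisibility by invoking Milgram's formula, which asserts that for any nondegenerate even lattice $(\mathfrak{L},q)$
\be
\frac{1}{\sqrt{|\CD(\mathfrak{L})|}}\sum_{x\in\CD(\mathfrak{L})} e^{\pi i\,\bar q(x)} \;=\; e^{\pi i(n_+ - n_-)/4}.
\ee
Since $\mathfrak{L}$ is unimodular, $\CD(\mathfrak{L})$ is trivial and the left-hand side equals $1$, forcing $(n_+ - n_-)/4$ to be an even integer, i.e.\ $n_+ - n_- \equiv 0 \pmod{8}$.

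For the classification the central step is to show that an indefinite $\mathfrak{L}$ contains a hyperbolic plane $U$ as an orthogonal summand. Indefiniteness together with Meyer's theorem (applicable once the rank is $\geq 5$; the only remaining possibility of signature $(1,1)$ is handled by a direct Gram-matrix calculation that forces $\mathfrak{L}\cong U$) produces a primitive isotropic vector $v\in\mathfrak{L}$. Unimodularity makes the functional $x\mapsto\la v,x\ra$ surjective onto $\mathbb{Z}$, so there exists $w'\in\mathfrak{L}$ with $\la v,w'\ra = 1$. Since $q(w')\in 2\mathbb{Z}$, the vector $w := w' - \tfrac{1}{2}q(w')\,v$ remains in $\mathfrak{L}$, has $q(w)=0$, and still pairs to $1$ with $v$. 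Thus $\mathbb{Z}v\oplus\mathbb{Z}w\cong U$, and because $U$ is itself unimodular this sublattice is automatically an orthogonal summand; its complement $\mathfrak{L}_1$ is then even unimodular of signature $(n_- -1,\,n_+ -1)$.

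Iterating this step $\min(n_-,n_+)$ times reduces the problem to identifying a definite even unimodular lattice of rank $|n_+ - n_-|$ with $E_8(\pm 1)^{\oplus |n_+-n_-|/8}$. This final identification is the main obstacle, because definite even unimodular lattices of a given rank are not unique in general (e.g.\ $D_{16}^+$ is even unimodular of rank $16$ but is not isometric to $E_8\oplus E_8$). The cleanest way around this is to bypass the definite classification entirely and instead invoke the Eichler--Kneser theorem: any two indefinite even unimodular lattices of the same signature lie in a single genus, and for rank $\geq 3$ an indefinite genus contains a single isometry class. Since $\mathrm{II}^{n_-,n_+}$ is manifestly even, unimodular and of the required signature, this yields $\mathfrak{L}\cong\mathrm{II}^{n_-,n_+}$. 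This is precisely the step I would cite from the Gerstein reference given in the statement.
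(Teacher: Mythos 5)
The paper does not actually prove this proposition: it is quoted as a classical classification theorem with a citation to \cite{Gerstein}, so there is no in-paper argument to compare against. Judged on its own, your sketch is the standard textbook proof and is essentially correct. Milgram's formula with trivial discriminant group gives $n_+-n_-\equiv 0\pmod 8$; the construction $w=w'-\tfrac12 q(w')v$ correctly splits off a unimodular $U$ from a primitive isotropic vector; and you rightly identify the real obstruction to the naive iteration, namely that definite even unimodular lattices are not unique in a given rank ($D_{16}^+$ versus $E_8^{\oplus 2}$), which is exactly why one must invoke the stronger Eichler--Kneser statement that an indefinite genus of rank $\geq 3$ contains a single isometry class (together with the fact, itself a local computation at $p=2$ where the mod-$8$ condition enters, that all even unimodular lattices of fixed signature lie in one genus).

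Two quibbles. First, there is a small gap in the hyperbolic-splitting paragraph: you invoke Meyer's theorem for rank $\geq 5$ and treat signature $(1,1)$ by hand, but this omits signature $(2,2)$, which is indefinite, has rank $4$, and satisfies the mod-$8$ condition, so it falls into neither branch. It does have an isotropic vector --- by Hasse--Minkowski plus the isotropy of rank-$4$ even unimodular lattices over every $\mathbb{Z}_p$ --- but that needs saying. Second, as you note yourself, once Eichler--Kneser is invoked the entire splitting-and-iterating paragraph becomes logically superfluous: $\mathrm{II}^{n_-,n_+}$ is even, unimodular and of the right signature, and uniqueness of the class in the genus finishes the proof (with $(1,1)$ checked directly). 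Either route is fine, but the writeup should commit to one rather than present the iteration as a step of the proof and then discard it.
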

The classification of positive definite even unimodular lattices is not as precise except for low rank where we have the following result. These lattices can only have rank that is a multiple of 8. We have the $E_8$ lattice for rank 8, $E_8^{2}$ and $D_{16}^+$ for rank 16, and the 24 Niemeier lattices in rank 24 in which the Leech lattice $\Lambda$ is the only one with no roots.

Lastly, we describe a procedure for obtaining a natural set of generators for integral lattices making their discriminant group structure manifest. Since for integral lattices we have $\mathfrak{L}\subset \mathfrak{L}^\vee$, one can express the generators $x_i$ of $\mathfrak{L}$ as a $\mathbb{Z}$-linear combination of generators $v_i$ of $\mathfrak{L}^\vee$. Let $\mathfrak{L}$ be a rank $n$ lattice, then we can define

\be
X := \begin{pmatrix}
x_1 \\
\vdots \\
x_n
\end{pmatrix}\,, \qquad V := \begin{pmatrix}
v_1 \\
\vdots \\
v_n
\end{pmatrix}\,,
\ee
so that
\be\label{eq:2.24}
X=GV
\ee
for some matrix $G$ with integer entries. In fact, $G$ is the \emph{Gram matrix} $G:=XX^T$ of the set of basis $x_i$, where multiplication of the matrix elements is given by the bilinear form \cite{Gerstein}. Furthermore, we will use the following algebraic fact.\footnote{The proposition is in fact true for all matrices with entries in a PID.}

\begin{prop}[Smith Decomposition \cite{Newman, DQ}]\label{prop:Smith}
If $M$ is a matrix with integer entries, then there are unimodular ($\mathbb{Z}$-invertible) matrices $P,Q$ such that $D=PMQ$ is a diagonal matrix with integer entries. Furthermore, denoting the $i$th diagonal element of $D$ by $d_i$, we have $d_{i} \mid d_{i+1}$ for all $i$.
\end{prop}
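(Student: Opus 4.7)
The plan is to prove Proposition \ref{prop:Smith} constructively by exhibiting an algorithm that repeatedly applies elementary row and column operations with integer entries, each of which corresponds to multiplication by a unimodular matrix. The three elementary operations I would use are (i) swapping two rows or columns, (ii) adding an integer multiple of one row or column to another, and (iii) multiplying a row or column by $-1$. Each is implemented by left (resp.\ right) multiplication by an elementary integer matrix whose inverse is again integral, so products of such matrices produce the desired unimodular $P$ and $Q$.

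The heart of the argument is a descent on $|M_{11}|$ using the division algorithm in $\mathbb{Z}$. First, by a row/column swap I would bring a nonzero entry of smallest absolute value into the $(1,1)$ slot. If some entry $M_{1j}$ in the first row is not divisible by $M_{11}$, write $M_{1j} = q M_{11} + r$ with $0 \le r < |M_{11}|$ and subtract $q$ times column $1$ from column $j$; this produces a nonzero entry strictly smaller than $|M_{11}|$, after which another swap puts it into position $(1,1)$. The same reduction works in the first column. Since $|M_{11}|$ is a positive integer, this process must terminate, leaving a matrix whose first row and first column are zero except at $(1,1)$.

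Next, I would enforce that the resulting pivot divides every remaining entry. If there is some $M_{ij}$ in the lower $(n-1)\times(m-1)$ block with $M_{11} \nmid M_{ij}$, add row $i$ to row $1$ to bring $M_{ij}$ into the first row, then repeat the division-algorithm reduction above; this again strictly decreases $|M_{11}|$. Once no such $M_{ij}$ exists, I have $d_1 := M_{11}$ dividing every entry of the remaining block. Applying the procedure recursively to the $(n-1)\times(m-1)$ submatrix yields a diagonal matrix $D = \mathrm{diag}(d_1, d_2, \dots)$; since divisibility by $d_1$ is preserved by all subsequent integer row/column operations on the submatrix, the chain $d_1 \mid d_2 \mid d_3 \mid \cdots$ is maintained throughout. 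Formally, induction on $\min(n,m)$ completes the argument, with the base case $1\times 1$ being trivial.

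The main obstacle, and the only step that requires real care, is maintaining the divisibility chain $d_i \mid d_{i+1}$: it would be easy to produce a diagonal form without it. The "pull in and reduce" trick in the third paragraph is exactly what forces this invariant, and verifying that it both terminates and is compatible with the recursive step (i.e.\ that no later operation in the submatrix can spoil $d_1$'s divisibility of everything) is the delicate point. The rest is routine bookkeeping of elementary matrices. I would not compute $P$ and $Q$ explicitly in the proof, since their existence is guaranteed by recording the sequence of elementary operations applied.
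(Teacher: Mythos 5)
Your proof is correct: it is the standard constructive argument via elementary integer row and column operations, with the descent on $|M_{11}|$ guaranteeing termination and the ``pull a bad entry into the first row and re-reduce'' step enforcing the divisibility chain $d_i \mid d_{i+1}$. The paper itself does not prove this proposition --- it is quoted as a known result with citations to Newman and McDuffie --- so there is nothing to compare against; your argument is exactly the textbook proof those references give, and the one delicate point you flag (that all entries of the lower block are divisible by $d_1$, and that integer operations within that block preserve this) is handled correctly.
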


Using Prop. \ref{prop:Smith} as applied to eq. \eqref{eq:2.24}, we get
\begin{align}
PX & = (PGQ)(Q^{-1}V)\\
\label{eq:2.14}
\tilde X & = D \tilde V\,,
\end{align}
where $\tilde X=PV$ and $\tilde V=Q^{-1}V$ are new sets of basis for $\mathfrak{L}$ and $\mathfrak{L}^\vee$ respectively, since the unimodular matrices $P,Q^{-1}$ act as a basis change on $X$ and $V$. Another way of stating \eqref{eq:2.14} is that for all integral lattices $\mathfrak{L}$, one can find a set of generators $\tilde x_i$ of $\mathfrak{L}$ and a set of generators $\tilde v_i$ of $\mathfrak{L}^\vee$ such that $\tilde x_i$ is an integer multiple of $\tilde v_i$:
\be\label{eq:2.15}
\tilde x_i = d_i \tilde v_i\,, \quad d_i\in\mathbb{Z}\,.
\ee
The form \eqref{eq:2.14}, or equivalently the set of basis \eqref{eq:2.15}, is called the \emph{Smith Normal Form} of the lattice. The relationship between the generators of the lattice and its dual immediately implies that the discriminant group $\CD(\mathfrak{L})$ is
\be \label{eq:2.16}
\CD(\mathfrak{L}) \cong \mathbb{Z}_{d_1} \times \cdots \times \mathbb{Z}_{d_n}\,.
\ee
Indeed, representing a lattice $\mathfrak{L}$ with its Smith Normal Form is the most natural way to make the discriminant group structure explicit.

\subsection{Construction of Narain lattices}\label{sec:Construct Theory}
We now describe the construction of the even unimodular lattice $\Gamma$ with signature $(24-d,8-d)$, starting with a primitive rank $d$ sublattice $\mathfrak{F}_L\subset \Lambda$ and an isometric primitive copy $\mathfrak{F}_R\subset E_8$. We follow \cite{Harvey:2017xdt} in our review. 

Suppose we are given the isometry
\be
\psi_{LR}:(\mathfrak{F}_L,q_L) \rightarrow (\mathfrak{F}_R,q_R)
\ee
between primitive sublattices $\mathfrak{F}_L\subset \Lambda$ and $\mathfrak{F}_R \subset E_8$, with their quadratic forms given by the embeddings $\Lambda\hookrightarrow \mathbb{R}^{24}$ and $E_8\hookrightarrow \mathbb{R}^8$ respectively.\footnote{We could generalize the construction by only requiring an isometry of discriminant groups.} We can linearly extend $\psi_{LR}$ to an isometry of the duals, which factors through and gives an isometry of the discriminant groups, $\bar\psi_{LR}:(\CD(\mathfrak{F}_L),\bar q_L) \rightarrow (\CD(\mathfrak{F}_R),\bar q_R)$. 

Using \eqref{eq:2.10}, we also get isometries between the discriminant groups of the orthogonal lattices
\be
\begin{split}
\bar\psi_L: (\CD(\mathfrak{F}_L^\perp),\bar q_L) \rightarrow (\CD(\mathfrak{F}_L),-\bar q_L)\,,\\
\bar\psi_R: (\CD(\mathfrak{F}_R),-\bar q_R) \rightarrow (\CD(\mathfrak{F}_R^\perp),\bar q_R)\,.
\end{split}
\ee 
Notice that $\bar\psi_{LR}$ equivalently induces an isometry with the signs of both quadratic forms inverted. We can compose all of these isometries to get an isometry of discriminant groups of orthogonal complements of $\mathfrak{F}_R,\mathfrak{F}_L$ as

\be
\bar\psi:= \bar\psi_R \circ \bar\psi_{LR} \circ \bar\psi_L: (\CD(\mathfrak{F}_L^\perp),\bar q_L) \rightarrow (\CD(\mathfrak{F}_R^\perp),\bar q_R)\,.
\ee 

Now that we have two even lattices $\mathfrak{F}_L^\perp$ and $\mathfrak{F}_R^\perp$ with isometric discriminant groups, we can glue them as described in Lemma \ref{lem:gluing lemma} to get the even unimodular lattice
\be\label{eq:2.31}
\Gamma := \coprod_{[r]\in \CD(\mathfrak{F}_L^\perp)} (r,\psi(r))+ (\mathfrak{F}_L^\perp,\mathfrak{F}_R^\perp)\,. \ee
Notice that the lattice indeed has signature $(24-d,8-d)$ since it is contained in $\left(\mathfrak{F}_L^\perp\right)^\vee \oplus \left(\mathfrak{F}_R^\perp\right)^\vee$, which is embedded in $\mathbb{R}^{24-d,0}\oplus\mathbb{R}^{0,8-d}$. Therefore, $\Gamma$ is isometric to $\mathrm{II}^{24-d,8-d}$.

The embedding of $\Gamma$ in $\mathbb{R}^{24-d,8-d}$ is special in the way that it contains the $O(n)$ automorphisms of $\Lambda$ and $E_8$ that fix $\mathfrak{F}_L$ and $\mathfrak{F}_R$, respectively. We denote these subgroups as $\Fix(\mathfrak{F}_L)\subset \Co0$ and $\Fix(\mathfrak{F}_R)\subset W(E_8)$, where $\Co0$ is the automorphism group of the Leech lattice and $W(E_8)$ is the Weyl group of the $E_8$ lattice, generated by the reflections in the hyperplanes orthogonal to its roots. 

Take $x \in \left(\mathfrak{F}_L^\perp\right)^\vee$ and $g_L \in \Fix(\mathfrak{F}_L)$. By Prop. \ref{prop:primitive}, there is an $x'\in \mathfrak{F}_L^\vee$ such that $x+x' \in \Lambda$. Since $g_L$ fixes $x'$,
\be g_L\cdot x - x = g\cdot (x+x') - (x+x') \in \Lambda\,.\ee
Also by considering $g_L$ as a map in $\mathfrak{F}_L^\perp$ and extending it to the dual lattice we have
\be 
g_L\cdot x - x \in \left(\mathfrak{F}_L^\perp\right)^\vee\,.
\ee
Therefore $g_L\cdot x-x$ belongs to the intersection $\left(\mathfrak{F}_L^\perp\right)^\vee \cap\Lambda$,
\be 
g_L\cdot x - x \in \mathfrak{F}_L^\perp\,.
\ee 
This means that $g_L$ preserves which copy of $\mathfrak{F}_L^\perp$ our $x$ belongs to in $\left(\mathfrak{F}_L^\perp\right)^\vee$, i.e. $g_L([x])=[x]$. Similar arguments apply for the right side with $y$ in $\left(\mathfrak{F}_R^\perp\right)^\vee$ and $g_R\in\Fix(\mathfrak{F}_R)$. We conclude that $g_L \times g_R$ induces an action on $\Gamma$ and

\be \label{eq:2.35}\Fix(\mathfrak{F}_L)\times \Fix(\mathfrak{F}_R) \subset \Aut(\Gamma)\,.\ee

Inspired by \eqref{eq:2.35}, the points in the heterotic Narain moduli space that correspond to $\Gamma$ are called \emph{Conway Subgroup Symmetric Compactifications}, or CSS compactifications for short \cite{Harvey:2017xdt}. We refer to the lattices $\Gamma$ as \emph{CSS lattices}, and sometimes by the Leech sublattice HM\# from which they were constructed (see section \ref{sec:Computation}).


\subsection{Computation using fixed-point sublattices}\label{sec:Computation}

In this section, we provide an overview of our supplemental Mathematica package for computing concrete examples of CSS lattices.

Our starting point is to find a sublattice $\mathfrak{F}_L$ of the Leech lattice $\Lambda$ with some nontrivial $\Fix(\mathfrak{F}_L)\subset \Co0$. Thanks to H{\"o}hn and Mason \cite{HM}, all sublattices with nontrivial fix groups have been classified up to conjugacy. They show that there are 290 such classes, and provide an invariant lattice $\mathfrak{F}_L$ and a coinvariant lattice $\mathfrak{F}_L^\perp$ from each class in their supplemental MAGMA program. We refer to their invariant lattices as HM\# where \# stands for the number of the lattice in Table 1 of their paper.

The only obstruction to the construction of CSS lattices is when $\mathfrak{F}_L$ has no primitive embedding in $E_8$. We have found a primitive embedding for all the rank $d<5$ Leech sublattices of H{\"o}hn and Mason, so there is a CSS construction on $T^n$ with $n>3$ for all cases. For rank $d=5$ sublattices (corresponding to $T^3$), we have found a primitive embedding only for HM69, HM70, HM72, HM73, HM74, HM76, HM77, HM82, HM84, HM85, HM87. For rank $d=6$ (corresponding to $T^2$), we have found an embedding only for HM44. For rank $d=7$ (corresponding to $S^1$), we have found no primitive embeddings for any of the sublattices.

In our Mathematica package, the command
\begin{verbatim}
    CSSLattice[dim,#]
\end{verbatim}
returns the generators of the desired CSS lattice as vectors in $\mathbb{R}^{20,4}$, where the first input \verb|dim| is the rank of the sublattice, and \verb|#| is the number of the lattice starting counting from the first lattice of the same rank in Table 1 of the H{\"o}hn-Mason paper. For example, for HM101 we have $\verb|dim|=4,\verb|#|=3$.

The command is not an algorithm but a database query, as some of the $E_8$ embeddings are computationally difficult to find in real time. We also provide the details of the construction such as the choice of $\mathfrak{F}_R$, and isometries between discriminant groups through various commands in the Mathematica package to make the construction of $\Gamma$ explicit. Refer to appendix \ref{app:A} for the specifics of the computations. Also, see appendix \ref{app:B} for an example CSS lattice computation that utilizes the data provided by the package.

\section{Duals of heterotic string with Conway subgroup symmetry}\label{sec:Duals}

In this section we consider dual descriptions of the Conway subgroup symmetric compactifications constructed in the previous section. There are no CSS compactifications on $S^1$ and a single one on $T^2$ constructed from the embedding of the HM44 lattice in the Leech lattice. The dual description of heterotic string on $T^2$ is F-theory on an elliptically fibered K3 surface and we discuss this in the first subsection below. CSS compactifications exist for compactifications on $T^n$ for all $n \ge 4$ for all of the sublattices in \cite{HM}. We focus on the case $n=4$ where the dual description is type IIA string theory on a K3 surface. The second subsection describes twelve examples in which we have obtained a detailed description of the dual theory and checked the duality both by comparison of the supersymmetry preserving symmetries and by comparison of the Narain lattice on the heterotic side with the cohomology lattice of the K3 surface on the type II side.

In this section, we use ATLAS notation \cite{ATLAS} to denote the group structure, where $G\cln H$ denotes the semidirect product and $G.H$ is a group with a normal subgroup $G$ such that $G.H/G\cong H$.

\subsection{Heterotic/F-theory dual}

Among rank 6 sublattices of the Leech lattice, only HM44 has a primitive embedding in $E_8$. This can also be deduced without the Mathematica computations by using \cite[Theorem 1.12.2]{Nikulin}, which states that if $\mathfrak{L}$ is a rank $d$ lattice that can be primitively embedded in $E_8$ and the discriminant group $\mathcal{D}(\mathfrak{L})$ has minimal number of generators $\ell$, then $8-d \geq \ell$. For $d=6$, only HM44 satisfies this condition with $\mathcal{D}(\mathfrak{L})\cong \mathbb{Z}_{10}\times \mathbb{Z}_{30}$.

Constructing the CSS lattice $\Gamma$ corresponding to HM44 as explained in appendix \ref{app:B}, we get a heterotic theory on $T^2$, which is dual to F-theory on an elliptically fibered K3 surface. To get the supersymmetry preserving symmetries, we consider the subgroup of automorphisms $G_R\subset \mathrm{Aut}(\Gamma)$ that fixes the right side of the Narain lattice, which turns out to be $A_5$.\footnote{Note that automorphisms of $\Gamma$ that fix the right side form a group containing $\mathrm{Fix}(\mathfrak{F}_L)$ and is possibly larger. In our case though, a calculation in MAGMA shows that $G_R= \mathrm{Fix}(\mathfrak{F}_L)\cong A_5$.} 

The $A_5$ symmetry we found on the heterotic side should correspond in the dual theory to a symplectic action on the K3 surface $X$, i.e. the automorphism $g\in\mathrm{Aut}(X)$ should act trivially on the canonical line bundle as $g^*\omega_X=\omega_X$. Using Xiao's list \cite[Table 2]{Xiao} of all possible symplectic automorphism groups of K3 surfaces, we can determine some candidate K3 surfaces for the duality.\footnote{Xiao's list is a stronger classification than Mukai's theorem \cite{Mukai}.} The symplectic K3 groups that contain $A_5$ together with their numbers in Xiao's table are as follows: $A_5$ (\#55), $S_5$ (\#70), $A_6$ (\#79), and $M_{20}=2^4\cln A_5$ (\#81). 

We can discard surfaces with $S_5$ (\#70) or $A_6$ (\#79) symplectic symmetry group since the action of $A_5$ in those is only a subgroup of all automorphisms, and they have no counterpart on the heterotic side of the duality. We also tentatively discard surfaces with the symplectic automorphism group $2^4 \cln A_5$ (\#81) where the $A_5$ acts on a group isomorphic to $2^4$, for which we could not find a counterpart on the heterotic side.\footnote{The reader can refer to \cite{Sarti} for the classification of such surfaces.}

Our arguments lead us to only consider K3 surfaces admitting an $A_5$ (\#55) symplectic symmetry group. The symplectic symmetries do not uniquely choose a surface, but their extension to a faithful action on the surface does. The following theorem classifies all such extensions.

\begin{thm}[\cite{zhang2006}]\label{thm:zhang}
Suppose $G$ acts on a K3 surface $X$ faithfully and $A_5\trianglelefteq G$ symplectically. Then $G$ is isomorphic to one of the following groups: $A_5,\,S_5=A_5.\IZ_2,\,A_5\times \IZ_2$. Each of these groups are realizable for some K3 surface.
\end{thm}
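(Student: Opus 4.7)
The plan is to split off the non-symplectic part of $G$ using the natural character on the holomorphic two-form, classify the symplectic factor via the Mukai-Xiao list, and bound the non-symplectic cyclic quotient using Nikulin-type constraints. Define $\chi : G \to \mathbb{C}^*$ by $g^*\omega_X = \chi(g)\,\omega_X$ and set $G_s := \ker\chi$. By hypothesis $A_5 \leq G_s$, and since $\chi(G)$ is a finite subgroup of $\mathbb{C}^*$, the quotient $G/G_s$ is cyclic of some order $n$. So $G$ is an extension $1 \to G_s \to G \to \mathbb{Z}_n \to 1$, and the theorem reduces to determining $G_s$ and $n$ up to isomorphism.

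For $G_s$, I would scan Xiao's list of symplectic automorphism groups of K3 and extract those containing $A_5$ as a \emph{normal} subgroup. In $A_6$ the subgroup $A_5$ is non-normal since $A_6$ is simple, and in $M_{20} \cong 2^4{:}A_5$ the copy of $A_5$ is a complement to the normal $2^4$ rather than a normal subgroup itself. The only symplectic groups on Xiao's list in which $A_5$ is normal are $A_5$ and $S_5$, so $G_s \in \{A_5,S_5\}$.

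Next, I would bound $n$. A generator of $G/G_s$ acts on the transcendental lattice $T_X$ by a primitive $n$-th root of unity, forcing $\phi(n) \leq \mathrm{rk}\,T_X$, and it normalizes the symplectic $A_5$, so it preserves the $A_5$-invariant and coinvariant sublattices of $H^2(X,\mathbb{Z})$ whose ranks and discriminant forms are pinned down by Mukai. Comparing these lattice data with the eigenvalue constraints on a putative non-symplectic element rules out large $n$. The main obstacle, and the heart of the proof, is excluding $G_s = S_5$ with $n = 2$: the $S_5$-invariant sublattice of $H^2(X,\mathbb{Z})$ is too constrained to admit a compatible commuting non-symplectic involution, and establishing this cleanly requires discriminant-form arithmetic rather than formal group theory.

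Granting these bounds, the extension problem is routine: $G_s = S_5,\, n = 1$ yields $G = S_5$; $G_s = A_5,\, n = 1$ yields $G = A_5$; and $G_s = A_5,\, n = 2$ yields an extension determined by the outer action of $\mathbb{Z}_2$ on $A_5$, so the trivial outer action gives $G = A_5 \times \mathbb{Z}_2$ and the nontrivial one gives $G = S_5$ (automatically split since $Z(A_5) = 1$). Realizability in each case follows from explicit K3 examples in the literature: Mukai's K3 with symplectic $A_5$, an $S_5$-invariant surface from the Mukai-Xiao constructions, and an $A_5$-invariant K3 admitting a commuting non-symplectic involution (for instance, a suitable double cover branched along an $A_5$-invariant sextic) for $A_5 \times \mathbb{Z}_2$.
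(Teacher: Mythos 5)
A preliminary remark on the comparison you were asked for: the paper contains no proof of this statement at all --- it is quoted as a known result from Zhang \cite{zhang2006} --- so there is no in-paper argument to measure your proposal against. Judged on its own, your reduction is the standard one (and, as far as the strategy goes, essentially Zhang's): the character $\chi(g)$ defined by $g^*\omega_X=\chi(g)\omega_X$ has cyclic image, so $G$ is an extension of $\mathbb{Z}_n$ by the symplectic part $G_s=\ker\chi$; since $A_5\trianglelefteq G$ and $A_5\leq G_s$, normality passes to $G_s$, and among Xiao's four candidates containing $A_5$ (namely $A_5$, $S_5$, $A_6$, $2^4\cln A_5$) normality fails in the simple group $A_6$ and for the complement in $2^4\cln A_5$, leaving $G_s\in\{A_5,S_5\}$; and the extension analysis for $G_s=A_5$, $n=2$ is clean because $Z(A_5)=1$ and $\mathrm{Out}(A_5)\cong\mathbb{Z}_2$, yielding exactly $A_5\times\mathbb{Z}_2$ and $S_5$. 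All of that is correct.

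The genuine gaps are the two exclusions you gesture at but do not carry out, and they are precisely where the content of the theorem lives. First, for $G_s=A_5$ the coinvariant lattice has rank $18$, so the $A_5$-invariant lattice has rank $4$ and signature $(3,1)$, and $T_X$ may have rank $2$ or $3$; your bound $\phi(n)\leq \mathrm{rk}\,T_X$ therefore still admits $n\in\{3,4,6\}$ when $\rho(X)=20$, and eliminating those cases needs an actual argument about how a non-symplectic element of such order would have to act on the invariant and coinvariant pieces and their discriminant forms --- writing that the lattice data ``rules out large $n$'' asserts the conclusion for exactly the cases that are not ruled out by the Euler-phi bound. Second, you explicitly defer the exclusion of $G_s=S_5$ with a further non-symplectic involution (which, since $Z(S_5)=1$ and $\mathrm{Out}(S_5)=1$, would force $G\cong S_5\times\mathbb{Z}_2$ of order $240$); as written, your argument proves only the weaker statement in which this group is still a possible value of $G$. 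Until both exclusions are supplied, the list is not pinned down to the three groups in the statement. The realizability discussion at the end is fine in outline; note that the paper's own example \eqref{eq:surface} already exhibits $S_5=A_5.\mathbb{Z}_2$ acting with symplectic $A_5$.
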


Here, the $\IZ_2$ factor describes the action on the canonical line bundle $\omega_X$: under the map $G\to G/A_5\cong \IZ_2$, if the element $g$ is mapped to $1\in \IZ_2$, then $g^*\omega_X=\omega_X$, and if $g$ is mapped to $-1\in \IZ_2$, then $g^*\omega_X=-\omega_X$. The faithful action group $G$ determines the surface uniquely \cite[Section 4]{zhang2005}. Also, by Lemma 1.1 in \cite{zhang2006}, Picard number of these surfaces satisfy $\rho(X)\geq 19$, ensuring that they are elliptic due to the fact that K3 surfaces with $\rho(X)\geq 5$ are elliptic \cite{huybrechts2016lectures}.

We now mention the candidate surfaces dual to the heterotic theory corresponding to HM44. One explicit surface is obtained by considering the intersection of an $S_5$ symmetric quadric and a cubic, which is the K3 surface
\begin{align}\label{eq:surface}
    X=\left\{\,\sum_{i=0}^4 X_i^2 = \sum_{i=0}^4 X_i^3=0 \,\right\}\subset \mathbb{P}^4\,.
\end{align}
The group $A_5$ acts on the surface symplectically by permuting the coordinates, and $S_5=A_5. \IZ_2$ acts faithfully. To see that $A_5$ acts symplectically and $S_5$ does not, one can use Lemma 2.1 in \cite{Mukai}.

Different elliptic fibrations of the same surface can give rise to different physics, so the choice of an elliptic fibration is also an important part of the F-theory data. The elliptic fibrations we should consider have to respect the absence of gauge enhancement on the heterotic side, as there are no lattice vectors $(p_L,0)$ with $p_L^2=2$. This means that the allowed singular fibers can only be of Kodaira type $\mathrm{I}_0, \mathrm{I}_1,$ or $\mathrm{II}$ \cite{Bershadsky}.

If the Picard lattice $\mathrm{Pic}(X)$ of the surface is computable, one can use the correspondance between the null vectors and elliptic fibrations. Specifically, there is a bijective correspondance between the primitive divisors $E\in \mathrm{Pic}(X)$ that lie in the nef cone with vanishing self-intersection $E\cdot E=0$ and the distinct elliptic fibrations. Moreover, the lattice $W$ of (-2)-vectors inside the null vector's orthogonal complement $E^\perp \subset \mathrm{Pic}(X)$ describe the singular fibers. Therefore, in order to find a fibration that has no gauge enhancement, one should find a null vector $E$ such that its orthogonal complement has root lattice of type $A_1^n$ for some $n$.

An explicit elliptically fibered K3 with an $A_5$ action can be given by the Weierstrass model \cite{Shioda}
\begin{align}\label{eq:Shioda}
    X:\qquad y^2 = x^3 + t^{11}-11 t^6 -t\,.
\end{align}
This surface is obtained by placing cusp fibers ($y^2=x^3$) at the vertices of an icosahedron inscribed inside $\mathbb{P}^1\cong S^2$, and having $A_5$ act symplectically by shuffling the singular fibers. Notice that it is necessarily one of the surfaces classified in Theorem \ref{thm:zhang}. In order to determine which one it is, one would look for the non-symplectic symmetries of the surface. 

There are 12 cuspidal fibers of Kodaira type $\mathrm{II}$ on the surface \eqref{eq:Shioda}, and therefore there is no gauge enhancement as expected. These properties make \eqref{eq:Shioda} a good candidate for the F-theory dual. 

However, note that our arguments do not uniquely pick out one surface, but by virtue of Theorem \ref{thm:zhang} and the fact that there are only finitely many non-isomorphic elliptical fibrations of a surface, we can at least conclude that there are only finitely many candidates for the dual K3 surface. We presented two such explicit examples (although they might be isomorphic) and showed techniques that may be helpful in finding others. In order to find the dual, further matchings need to be carried out so that a unique surface with a fibration can be chosen out of the finitely many cadidates.

\subsection{Heterotic/type IIA duals}
In this section, we consider the type IIA duals of CSS heterotic string compactifications with $d=4$. Specifically, heterotic string theory on $T^4$ is dual to type IIA string theory on a K3 surface. On the type IIA side of this duality we will only consider geometric orbifolds $T^4/\mathbb{Z}_N$ that can be blown up to a K3 surface. We report our findings in Table \ref{Table:1}. We adopt the somewhat unusual convention that $\alpha'=1$ in this section in order to more easily use results from \cite{Hiker}.

We can describe the moduli space of non-linear sigma models on the K3 surface $X$ by a choice of positive definite 4-dimensional plane $x$ in $\mathbb{R}^{20,4}$ modulo lattice automorphisms of $\Gamma^{20,4}\cong\mathrm{II}^{20,4}$, where $\Gamma^{20,4}$ is the even homology lattice $H_{\text{even}}(X,\mathbb{Z})$ equipped with the intersection form. The embedding of this lattice in $\mathbb{R}^{20,4}$ is given by Poincar{\'e} duality $\Gamma^{20,4}\cong H_{\text{even}}(X,\mathbb{Z}) \cong H^{\text{even}}(X,\mathbb{Z}) \subset H^{\text{even}}(X,\mathbb{R})\cong \mathbb{R}^{20,4}$, where the even cohomology ring is equipped with the cup product as its bilinear form. In short, the moduli space is
\be\label{eq:3.1}
O(\Gamma^{20,4})\setminus \CT^{20,4}\,,
\ee
where $\CT^{20,4}$ is the Grassmanian of 4-planes. We identify the dual of a heterotic CSS model by choosing the $\mathbb{R}$-span of $\mathfrak{F}_R^\perp$ to be the positive definite 4-plane $x\subset H^{\text{even}}(X,\mathbb{R})$. Then the automorphisms of the orthogonal complement $x^\perp \cap \Gamma^{20,4}$ correspond to the symmetries that preserve the $\CN=(4,4)$ superconformal algebra. This way, we can detect the type II duals by comparing their $\CN=(4,4)$ symmetry groups to $\Fix(\mathfrak{F}_L)$. However, only some points in the moduli space of non-linear sigma models on K3 are solvable, among these are torus sigma model orbifolds, Gepner models, and orbifolds thereof. There are some such examples in the literature that we can identify with their HM\# duals: $D_4/\mathbb{Z}_2$ with HM99 \cite{gtvw}, $A_1^4/\mathbb{Z}_2$ with HM107, $(1)^6$ Gepner with HM101, $(2)^4$ Gepner with HM116 \cite{GHV}, and $A_4/\mathbb{Z}_5$ with HM122 \cite{VG}.

We describe our procedure for obtaining such duals, with which we reproduce some of the mentioned results above. We will be mainly working with K3 surfaces obtained by blowing up the singularities of a torus orbifold $T^4/\mathbb{Z}_N$.
To find out when a torus orbifold is a K3 model, one can calculate the elliptic genus as in \cite{Vol}. We have checked using the classification in \cite{Vol} that the choice of orbifolding symmetry in all our models indeed produce a K3 theory.

The torus models we quote from \cite{Vol} are classified according to their $\CN=(4,4)$ symmetry groups. We borrow the naming convention for such tori in loc. cit. For example, $A_1^2A_2$ is the torus model corresponding to the choice of the positive definite plane $x$ that contains the root lattice of $\mathfrak{su}(2)^{\oplus 2}\oplus \mathfrak{su}(3)$ in the even D-brane charge lattice $H^{\mathrm{even}}(T^4,\mathbb{Z})$, see \eqref{eq:3.41}. 

With our procedure we are able to specify some of the symmetry groups in the H{\"o}hn-Mason paper \cite{HM} whose group structure was not specified. We reason as follows. After $\mathbb{Z}_N$-orbifolding, one can argue that the symmetry group contains the extraspecial group $N^{1+k}$ for some $k$ and find which torus symmetries survive the orbifolding. However, this analysis by itself does not always determine the full symmetry group. Therefore, we compute the cohomology or D-brane charge lattice to confirm the duality to the corresponding CSS lattice and also determine the symmetry group structures.

\subsubsection{The \texorpdfstring{$A_1^2A_2 / \mathbb{Z}_3$}{A12 A2/Z3} model}\label{sec:Z3 model}

We first consider the $A_1^2 A_2 $ model from section 4.4.5 of \cite{Vol} and use it to state some relevant concepts and facts. Consider the following torus model with its lattice $L$ represented as a matrix with columns $l_1,\dots,l_4$ as its generators and vanishing $B$-field.

\be \label{eq:3.2} L = \frac{\sqrt{2}}{3^{1/4}}\left(
\begin{array}{cccc}
 1 & \frac{1}{2} & 0 & 0 \\
 0 & \frac{\sqrt{3}}{2} & 0 & 0 \\
 0 & 0 & 1 & \frac{1}{2} \\
 0 & 0 & 0 & \frac{\sqrt{3}}{2} \\
\end{array}
\right)\,, \qquad B=\begin{pmatrix}
0 & 0 & 0 & 0\\
0 & 0 & 0 & 0\\
0 & 0 & 0 & 0\\
0 & 0 & 0 & 0
\end{pmatrix}\,. \ee
The winding-momentum lattice is given by 
\begin{align}\label{eq:3.3}
\Gamma_{\text{w-m}}^{4,4} &=\Big\{\,\frac{1}{\sqrt{2}}(m-Bl+l, m-Bl-l) \mid l\in L, m\in L^\vee\,\Big\}\\
\begin{split}
  &= 3^{-1/4} \Big( \mathrm{span}_\mathbb{Z} \left[(1,1),(\mathbf{i},-\mathbf{i}),(e^{2\pi \mathbf{i}/3},e^{2\pi \mathbf{i}/3}),(\mathbf{i}e^{2\pi \mathbf{i}/3},-\mathbf{i}e^{2\pi \mathbf{i}/3})\right] \\
    &\qquad\quad\oplus\, \mathrm{span}_\mathbb{Z} \left[
(\mathbf{j},- \mathbf{j}), 
(\mathbf{k}, \mathbf{k}), 
(\mathbf{j}e^{2\pi \mathbf{i}/3},−\mathbf{j} e^{2\pi \mathbf{i}/3}),
(\mathbf{k} e^{2\pi \mathbf{i}/3}, \mathbf{k} e^{2\pi \mathbf{i}/3})
\right] \Big)\,,
\end{split}\end{align}
where $\mathbf{i},\mathbf{j},\mathbf{k}$ are quaternionic elements. The $\CN=(4,4)$ symmetries of this model can be written as
\be
G = (\mathrm{U}(1)^4\times \mathrm{U}(1)^4).G_0\,,\ee
where
\be
G_0 \cong \la (e^{2\pi \mathbf{i}/3},e^{2\pi \mathbf{i}/3}),(-\mathbf{j},\mathbf{j}),(-\mathbf{i},\mathbf{i})\ra \cong \mathbb{Z}_2.(\mathbb{Z}_2\times S_3)\,.
\ee

To understand how the symmetries act on the model, recall that a torus CFT is defined in terms of currents $j^a(z)$, fermions $\psi^a(z)$, as well as their right-moving analogs, and the vertex operators $V_\lambda(z,\bar z)$ for $\lambda \in \Gamma_{\text{w-m}}^{4,4}$. The $\mathrm{U}(1)^4\times\mathrm{U}(1)^4$ part of the symmetry group is generated by the zero modes $j_0,\tilde j_0$. Therefore the interesting part of the symmetry group is $G_0$, whose elements $g=(g_L,g_R)\in G_0$ act on the CFT as
\be
\begin{split}
j^1+j^2 \mathbf{i}+j^3 \mathbf{j}+j^4 \mathbf{k} &\longrightarrow g_L\cdot (j^1+j^2 \mathbf{i}+j^3 \mathbf{j}+j^4 \mathbf{k}) \,,\\
\psi^1+\psi^2 \mathbf{i}+\psi^3 \mathbf{j}+\psi^4 \mathbf{k} & \longrightarrow g_L\cdot (\psi^1+\psi^2 \mathbf{i}+\psi^3 \mathbf{j}+\psi^4 \mathbf{k})\,, \\
V_{\lambda} &\longrightarrow \xi_g(\lambda)V_{g^{-1} \cdot \lambda}\,,
\end{split}
\ee
where the action in every case is quaternionic left multiplication and $\xi_g(\lambda)=\pm 1$. Similar actions are defined on the right with $g_R$ and $\tilde j^a,\tilde \psi^a$.

Consider orbifolding the torus theory by the symmetry $g=(e^{2\pi \mathbf{i}/3},e^{2\pi \mathbf{i}/3})$ with $g^3=1$. This produces a K3 model by calculations on the elliptic genus. It has been shown in \cite{VG} that a consistent $T^4/\mathbb{Z}_3$ orbifold contains the symmetry subgroup $3^{1+4}\cln\mathbb{Z}_2$, where $\mathbb{Z}_2$ is induced by the involution $(-1,-1)$ of the underlying torus model. The extraspecial part $3^{1+4}$ is the algebra of operators defined as
\be
\lim_{z\rightarrow 0}V_\lambda (z) \ket{m,k} = e^{(k)}_\lambda \ket{m,k}, \quad k\in \mathbb{Z}_3
\,,\ee
where $k$ denotes the $g^k$-twisted sector, $m$ is some index in that sector, and the symmetries act on the vertex operators $T_{m,k}$ of the orbifolded theory by conjugation. It can be shown that there are elements $x_1,x_2,y_1,y_2 \in \Gamma^{4,4}/(1-g)\Gamma^{4,4}\cong \mathbb{Z}_3^4$ such that the symmetry group $3^{1+4}$ is generated by $e_{x_1}^{(k)},e_{x_2}^{(k)},e_{y_1}^{(k)},e_{y_2}^{(k)}$ satisfying the relations
\be
e_{x_i}^{(k)}e_{y_j}^{(k)}=\zeta^{k\delta_{ij}}e_{y_j}^{(k)}e_{x_i}^{(k)}\,,
\ee
\be
\left(e_{x_i}^{(k)}\right)^3=1=\left(e^{(k)}_{y_i}\right)^3\,,
\ee
where $\zeta=e^{2 \pi i/3}$. Each twisted sector $\CH^{(k)}$ is a 9 dimensional representation of $3^{1+4}$, and one can choose the eigenvectors of $e_{x_i}^{(k)}$ as a basis for this space so that
\begin{align}
e_{x_i}^{(k)} \ket{m_1,m_2,k} &=  \zeta^{m_i} \ket{m_1,m_2,k}\, \\
e_{y_i}^{(k)} \ket{m_1,m_2,k} &=  \ket{m_1+k\delta_{1i},m_2+k\delta_{2i},k}\,.
\end{align}

Now consider the symmetries of the $A_1^2 A_2$ model. Let $h_\mathbf{i}=(-\mathbf{i},\mathbf{i})$, $h_\mathbf{j}=(-\mathbf{j},\mathbf{j})$, and $h_\mathbf{k}=(\mathbf{k},\mathbf{k})=h_\mathbf{i}h_\mathbf{j}$. Then we have
\begin{align}\label{eq:3.13}
\begin{split}
h_\mathbf{i}^{-1}gh_\mathbf{i} & =g\,, \\
h_\mathbf{j}^{-1}gh_\mathbf{j} & =g^{-1}\,, \\
h_\mathbf{k}^{-1}gh_\mathbf{k} & =g^{-1}\,.
\end{split}
\end{align}
Therefore these symmetries survive the orbifolding. Furthermore, (\ref{eq:3.13}) implies that $h_\mathbf{i}$ preserves each twisted sector and $h_\mathbf{j},h_\mathbf{k}$ maps $g^{k}$-twisted sector to the $g^{3-k}$-twisted sector. We conclude that the symmetry group is $3^{1+4}\cln Q_8 = 3^{1+4}\cln\mathbb{Z}_2.\mathbb{Z}_2^2$, where the $\mathbb{Z}_2$ normal subgroup is generated by the involution, $\{(1,1),(-1,-1)\}\triangleleft Q_8$.

\subsubsection{The \texorpdfstring{$A_4 / \mathbb{Z}_2$}{A4/Z2} model}\label{sec:Z2 model}
It is known that the $D_4$ torus has the largest symmetry group as a torus model. When orbifolded by $\mathbb{Z}_2$, it gives the largest symmetry group possible \cite{gtvw} in the GHV classification of K3 sigma model symmetries \cite{GHV}, that is $2^8\cln M_{20}=\Fix(\mathrm{HM}99)$. It is natural to expect that the $\mathbb{Z}_2$ orbifold of the $A_4$ torus (the torus model with the second largest symmetry group) produces the second largest possible symmetry group $[2^9].A_5=\Fix(\mathrm{HM}100)$, where $[2^9]$ is an undetermined group of order $2^9$.

Consider the $A_4$ model from section 4.4.2 of \cite{Vol} given as
\be \label{eq:3.14} L = \frac{5^{1/4}}{\sqrt{2}}\begin{pmatrix}
1 & 1 & \frac{1}{2} & \frac{1}{2}\\
0 & 1 & \frac{1}{2} & \frac{1}{2}\\
0 & 0 & \frac{1}{\sqrt{2}} & 0\\
0 & 0 & 0 & \frac{1}{\sqrt{2}}
\end{pmatrix},\, \qquad B=\frac{1}{\sqrt{5}}\left(
\begin{array}{cccc}
 0 & -1 & -\sqrt{2} & 0 \\
 1 & 0 & 0 & -\sqrt{2} \\
 \sqrt{2} & 0 & 0 & 1 \\
 0 & \sqrt{2} & -1 & 0 \\
\end{array}
\right)\,. \ee
It has the $\CN=(4,4)$ symmetry group

\be
G = (\mathrm{U}(1)^4\times \mathrm{U}(1)^4).G_0\,,\ee
\be\label{eq:3.16}
G_0 \cong \mathbb{Z}_2.A_5\,,
\ee
where $\mathbb{Z}_2$ corresponds to the $(-1,-1)$ involution. In a similar way, all $\mathbb{Z}_2$ orbifold symmetry groups contain a $2^{1+8}$ subgroup, generated by the half-period translations of the lattice $L$ together with half-period translations of its T-dual acting on the states with charge $(p_L,p_R)$ of the torus theory by
\be
H_{(a_L,a_R)}: (-1)^{2(a_L, a_R)\cdot (p_L,p_R)}\,, \qquad (a_L,a_R)\in \left(\frac{1}{2}\Gamma_{\text{w-m}}^{4,4}\right)/\Gamma_{\text{w-m}}^{4,4} \cong \mathbb{Z}_2^8\,,
\ee
and the quantum symmetry $\CQ$ fixing the untwisted sector and acting as $(-1)$ on the twisted sector.

Since all symmetries in \eqref{eq:3.16} modulo the involution survive the orbifolding, the symmetry group of the $A_4/\mathbb{Z}_2$ model contains $2^{1+8}\cln A_5$. This analysis specifies the subgroup $[2^9]$ mentioned earlier.

\subsubsection{K3 lattice computations}\label{sec:K3 lattice}
We present a review of the cohomology theory of torus orbifold K3 surfaces and how orbifolding embeds the even cohomology lattice of $T^4$ in the cohomology lattice of K3. The arguments will be cut short and mostly omitted to give just enough background to state the facts \eqref{eq:3.33} and \eqref{eq:3.34}. For the complete treatment, the reader is referred to the original paper \cite{Wen}.

The moduli space of $T^4$ and K3 theories are given by a choice of a positive definite 4-dimensional plane $x$ in the even cohomology $H^{\text{even}}(X,\mathbb{R})\cong \mathbb{R}^{4+\delta,4}$ where $\delta=0$ for tori and $\delta=16$ for K3 surfaces. The integer cohomology lattice $\Gamma^{20,4}=H^{\text{even}}(X,\mathbb{Z})$ is embedded in $H^{\text{even}}(X,\mathbb{R})$ by extending the lattice from a $\mathbb{Z}$-module to an $\mathbb{R}$-module. From Prop. \ref{prop:2}, we know that
\be
H^{\text{even}}(X,\mathbb{Z}) \cong E_8(-1)^{\oplus \delta/8} \oplus U^{\oplus 4}=\mathrm{II}^{4+\delta,4}\,.
\ee
Therefore, a theory is determined by the position of the plane $x$ in relation to $\mathrm{II}^{4+\delta,4}$, and we may as well consider the covering $\CT^{4+\delta,4}$ of the moduli space (\ref{eq:3.1}).

We describe the technology developed in \cite{Wen}. Let us begin with a torus theory $\mathbb{R}^4/L$ with lattice $L=\mathrm{span}_\mathbb{Z}(l_1,l_2,l_3,l_4)$ and background $B$. We would like to get the position of the plane $x_T$ in relation to $\Gamma^{4,4}$ and see how it determines the plane $x$ in $\Gamma^{20,4}$ after orbifolding.\footnote{We denote objects associated to the torus theory with a subscripted T.} There are two methods for obtaining $x_T$. We could have used the $\mathrm{SO}(4,4)$ triality on $\Gamma^{4,4}_{\text{w-m}}$ \cite{Vol}, but we will use a more geometric approach: there is an isomorphism
\begin{align}
\CT^{3+\delta,3}\times \mathbb{R}^+ \times H^2(X,\mathbb{R}) & \cong \CT^{4+\delta,4}\,,\\
(\Sigma,V,B)& \mapsto \mathrm{span}_{\mathbb{R}}(\xi(\Sigma),\xi_4) \label{eq:3.18}\,,
\end{align}
with the maps
\begin{align}
    \begin{split}
        \xi(\sigma)& :=\sigma-\la\sigma,B\ra v\,,\\
        \xi_4& :=v^0+B+\left(V-\frac{\la B,B\ra}{2}\right)v\,,
    \end{split}
\end{align}
where $v^0$ and $v$ satisfy $\la v,v^0 \ra =1, \la v,v\ra = \la v^0,v^0 \ra=0$, and generate $H^0(X,\mathbb{Z})$ and $H^4(X,\mathbb{Z})$ respectively. The positive definite 3-plane $\Sigma$ is the self-dual 2-forms and is interpreted as the choice of a volume 1 metric on the space $X$, $V$ is the volume of $X$, and $B$ is the Kalb-Ramond background as an element of $H^2(X,\mathbb{R})$. The triplet $(\Sigma,V,B)$ is called the \emph{geometric interpretation} of $x$. 

Now we describe the procedure to obtain $(\Sigma_T,V_T,B_T)$ given the torus lattice $L$ and background $B_T$. Consider the coordinate system in terms of the generators of the lattice $L$ as
\be\label{eq:3.20}
\lambda_1 l_1 + \lambda_2 l_2 + \lambda_3 l_3 + \lambda_4 l_4 = x_1 e_1 + x_2 e_2 + x_3 e_3 + x_4 e_4 \in \mathbb{R}^4/L\,,
\ee
where $e_i$ is the $i$th standard basis of $\mathbb{R}^4$. We can present the integral even cohomology of the torus in terms of $\lambda_i$ as
\begin{align}\label{eq:3.21}
\begin{split}
\Gamma^{4,4}=\mathrm{span}_\mathbb{Z}(d\lambda_1 \wedge d\lambda_2, d\lambda_3 \wedge d\lambda_4, d\lambda_1 \wedge d\lambda_3, d\lambda_4 \wedge d\lambda_2, d\lambda_1 \wedge d\lambda_4,d\lambda_2\wedge d\lambda_3,\\
1,d\lambda_1\wedge d\lambda_2\wedge d\lambda_3\wedge d\lambda_4)
\end{split}
\end{align}
with the bilinear form given by the wedge product
\be
\la \alpha,\beta \ra := \int \alpha \wedge \beta\,.
\ee
We can see that the generators of the lattice $\Gamma^{4,4}$ form the standard $U^{\oplus 4}$ basis in the order they are written in (\ref{eq:3.21}). 

The metric $g$ on the torus $T^4=\mathbb{R}^4/L$ is inherited from the Euclidean covering space, therefore the induced volume form used for defining the Hodge star operator is
\be
\omega = dx_1 \wedge dx_2 \wedge dx_3 \wedge dx_4\,,
\ee
and the self-dual 2-forms forming a set of basis for $\Sigma$ are
\be
dx_1 \wedge dx_2 + dx_3 \wedge dx_4,\quad dx_1\wedge dx_3 + dx_4 \wedge dx_2, \quad dx_1 \wedge dx_4 + dx_2 \wedge dx_3\,.
\ee
We choose to express all the lattice elements in terms of $d\lambda_i$ to keep the $U^{\oplus 4}$ structure explicit. We find $dx_i$ in terms of $d\lambda_i$ by taking the exterior derivative of (\ref{eq:3.20}),
\be
dx_i = \sum_{j=1}^4 (e_i\cdot l_j) d\lambda_j\,.
\ee

Next, we set $V_T=\det (L)$, as simply the volume of the torus. Finally, if $B_T$ is an antisymmetric matrix as in \eqref{eq:3.3}, one needs to convert it from $dx_i \wedge dx_j$ to $d\lambda_i \wedge d\lambda_j$ basis by
\be\label{eq:3.28}
B \longrightarrow L^T B L\,,
\ee
so that the $d\lambda_i \wedge d\lambda_j$ component is given by the matrix element $B_{ij}$.

Now that we have $(\Sigma_T,V_T,B_T)$, we describe how geometric orbifolding by $G=\mathbb{Z}_M$, $M\in\{2,3,4\}$ induces the following map:\footnote{All of the following statements are still true for $M=6$, $\hat D_n, n\in\{4,5\}$, and the binary tetrahedral group $\hat{\mathbb{T}}$, but we will not consider such orbifolds.}
\be
(\Sigma_T,V_T,B_T) \longmapsto (\Sigma,V,B) \in \CT^{20,4}\times \mathbb{R}^+ \times H^2(X,\mathbb{R})\,.
\ee
To find the image of $\Sigma_T$, we consider $H^2(T^4,\mathbb{Z})$ after orbifolding. Away from the singularities, the orbifolding $\pi:T^4 \rightarrow X$ is a degree $M$ map. Keeping the $G$-invariant elements of the cohomology $H^2(T^4,\mathbb{Z})^{G}$, one would expect that the induced map is an embedding $H^2(T^4,\mathbb{Z})^{G}\hookrightarrow H^2(X,\mathbb{Z})$. Indeed, it was shown in \cite{Ino} that
\be
\pi_*(H^2(T^4,\mathbb{Z})^G)\cong H^2(T^4,\mathbb{Z})^G(|G|)\,,
\ee
so that there is an isometric embedding
\be\label{eq:3.29}
\sqrt{|G|}H^2(T^4,\mathbb{Z})\hookrightarrow H^2(X,\mathbb{Z})\,,
\ee
with which the map $\Sigma_T\mapsto \Sigma$ is defined.

We now describe the lattice $H^2(X,\mathbb{Z})$. Along with the embedding of the torus 2-cohomology, one has the Poincar{\'e} duals of the exceptional divisors $E_s^{(j)}$ obtained from blowing up the singularities, where $s$ denotes the singular point. The lattice of exceptional divisors $\CE_{|G|}$ have the intersection form $A_1^{16},A_2^9,$ or $A_3^4\oplus A_1^6,$ for a $\mathbb{Z}_2,\mathbb{Z}_3,\mathbb{Z}_4$ orbifold respectively. We have $\CE_{|G|} \perp \sqrt{|G|}H^2(T^4,\mathbb{Z})$, since the exceptional divisors have volume zero. In the case of $G=\mathbb{Z}_2$, the torus lattice $\sqrt{|G|}H^2(T^4,\mathbb{Z})$ and the exceptional divisors are both primitive lattices, therefore one can use Lemma \ref{lem:gluing lemma} to get a lattice with signature (19,3) isometric to $H^2(X,\mathbb{Z})$. However, this is not the case for other $G=\mathbb{Z}_M$, and one needs to consult to Prop. 2.1 of \cite{Wen} for the set $M_{|G|}$ of generators not in $\CE_{|G|}$ and $\sqrt{|G|}H^2(T^4,\mathbb{Z})$, as well as the set of generators of the primitive lattice $\Pi_{|G|}$ that contains $\CE_{|G|}$.

We can extend the embedding $\pi_*$ to the whole even cohomology by similar arguments
\be
\hat\pi: \sqrt{|G|}H^{\text{even}}(T^4,\mathbb{Z})\hookrightarrow H^{\text{even}}(X,\mathbb{Z})\,,
\ee
which gives the desired mapping $x_T \rightarrow x$. Though, we still need to express $H^{\text{even}}(X,\mathbb{Z})$ in terms of exceptional divisors and torus cohomology elements for this mapping to be meaningful.

The element $\sqrt{|G|}v$ still generates $H^4(X,\mathbb{Z})$, but $\sqrt{|G|}v^0$ Poincar{\'e} dual to $T^4-\{s\mid s\text{ is singular}\}$ does not, as there are exceptional divisors that it does not take into account. Considering a $B$-field induced solely by orbifolding offsets this issue. To describe the $B$ field we define
\be
B_{|G|}:\, \frac{1}{|G|}\la B_{|G|},E_s^{(j)} \ra = \frac{n_s^{(j)}}{|G'|}\,,
\ee
where $s$ is a $G'\subset G$ type fixed point and $\sum_j n_s^{(j)}E_s^{(j)}$ is the highest root in the ADE type lattice $\mathrm{span}_\mathbb{Z}(E_s^{(j)})$ for fixed $s$. Notice that this calculation only depends on $G$.

We define the elements dual to the point and the volume of $X$ as
\be
\hat v := \sqrt{G} v\,, \qquad \hat v^0 :=\frac{1}{|G|}v^0 - \frac{1}{|G|} B_{|G|} -\frac{\la B_{|G|},B_{|G|}\ra}{2|G|^2} \sqrt{|G|} v\,.
\ee
These elements also define the translation between $x$ and its geometric interpretation as in (\ref{eq:3.18}). We can now finish the construction of $H^{\text{even}}(X,\mathbb{Z})$.
\begin{lem}[Lemma 3.1, \cite{Wen}]
The lattice $\Gamma^{20,4}=H^{\text{even}}(X,\mathbb{Z})$ is generated as
\be\label{eq:3.33}
\Gamma^{20,4}=\mathrm{span}_\mathbb{Z}\left(M_{|G|} \cup \{\,E-\la E,\hat v^0\ra\hat v \mid E \in \Pi_{|G|}\,\}\right)\,.
\ee
\end{lem}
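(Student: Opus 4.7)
The plan is to realize the right-hand side of \eqref{eq:3.33} as a full-rank, unimodular sublattice of $H^{\text{even}}(X,\mathbb{Z})$; by the uniqueness of indefinite even unimodular lattices in signature $(20,4)$ (Prop. \ref{prop:2}), any such sublattice must coincide with the whole ambient lattice, so the two sides agree.

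First I would verify that each proposed generator is an integral cohomology class on $X$. The elements of $M_{|G|}$ are integral by construction (Prop. 2.1 of \cite{Wen}). For the combinations $E-\la E,\hat v^0\ra\hat v$ with $E\in\Pi_{|G|}$, the key point is that $\hat v^0$ is not the naive generator of $H^0$ but a shift of $\tfrac{1}{|G|}v^0$ by the $B$-field $B_{|G|}$ and a multiple of $\hat v$. The pairing $\la E,\hat v^0\ra$ therefore picks up the contribution $-\tfrac{1}{|G|}\la B_{|G|},E\ra$, which is generically fractional. The term $-\la E,\hat v^0\ra\hat v$ adds back an $H^4$-piece that precisely absorbs this fractional part, so that the whole combination lands in $H^{\text{even}}(X,\mathbb{Z})$. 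Confirming this requires the explicit formula $\tfrac{1}{|G|}\la B_{|G|},E_s^{(j)}\ra=n_s^{(j)}/|G'|$ together with the fact that $\sum_j n_s^{(j)}E_s^{(j)}$ is the highest root of the ADE component at each singular point, so that after clearing denominators across a set of generators of $\Pi_{|G|}$ the correction falls in $\mathbb{Z}\hat v$.

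Next I would confirm that these classes span a full-rank unimodular sublattice of $H^{\text{even}}(X,\mathbb{Z})$. The generators assemble from the orbifold embedding $\hat\pi$, the exceptional divisors of the blow-up, and the elements of $M_{|G|}$, yielding rank $24$, matching that of $\mathrm{II}^{20,4}$. Pairings among the $E-\la E,\hat v^0\ra\hat v$ reduce to $\la E_1,E_2\ra_{\Pi_{|G|}}\in\mathbb{Z}$, since $\hat v$ is orthogonal to both $H^2$ and $H^4$; pairings within $M_{|G|}$ are integer by construction; the hyperbolic pair generated by $\hat v$ and its dual already sits inside $M_{|G|}$ and contributes a $U$ block. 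To get unimodularity one computes the discriminant of $\sqrt{|G|}H^2(T^4,\mathbb{Z})$, which by \cite{Ino} is $|G|^4$, and checks that it is precisely cancelled by the gluing with $\Pi_{|G|}$ modulo $\CE_{|G|}$. The resulting Gram matrix has determinant $\pm 1$, which together with the even signature $(20,4)$ forces the sublattice to equal $\mathrm{II}^{20,4}\cong H^{\text{even}}(X,\mathbb{Z})$.

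The main obstacle is the integrality step: showing that $E-\la E,\hat v^0\ra\hat v$ really is an integer cohomology class for every generator $E$ of $\Pi_{|G|}$. This requires a case-by-case analysis for $|G|\in\{2,3,4\}$ using the explicit ADE root data ($A_1^{16}$, $A_2^9$, and $A_3^4\oplus A_1^6$ respectively) and the definition of $B_{|G|}$. Once these integrality computations are in hand, the remainder is bookkeeping: combining unimodularity of the Gram matrix with Prop. \ref{prop:2} gives the claimed generating set for $\Gamma^{20,4}$.
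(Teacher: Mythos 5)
This lemma is imported verbatim from Wendland's paper and the present paper deliberately omits its proof (``the arguments will be cut short and mostly omitted\dots the reader is referred to the original paper''), so there is no in-paper argument to match; your proposal has to be judged against the source's strategy, which is indeed to verify directly that the stated span is an even self-dual lattice of signature $(20,4)$. Your overall skeleton (rank count, evenness, unimodularity, then uniqueness of $\mathrm{II}^{20,4}$) is therefore the right one in spirit. However, there is a genuine gap in your first and central step. You claim that the term $-\la E,\hat v^0\ra\hat v$ ``absorbs'' a fractional contribution so that $E-\la E,\hat v^0\ra\hat v$ lands in the naive integral cohomology $H^0(X,\mathbb{Z})\oplus H^2(X,\mathbb{Z})\oplus H^4(X,\mathbb{Z})$. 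The computation goes the other way: since $v^0\in H^0$ pairs to zero with $H^2$, one has $\la E_s^{(j)},\hat v^0\ra=-\tfrac{1}{|G|}\la B_{|G|},E_s^{(j)}\ra=-n_s^{(j)}/|G'|$, so the corrected class is $E_s^{(j)}+\tfrac{n_s^{(j)}}{|G'|}\hat v$, which carries a \emph{fractional} multiple of the generator $\hat v$ of $H^4(X,\mathbb{Z})$. These twisted generators are deliberately \emph{not} contained in the naive integral span of the geometric classes --- that is the whole content of the lemma (the $B$-field $B_{|G|}$ shifts the D0-charge of a brane wrapping an exceptional divisor). Consequently your framing ``full-rank unimodular sublattice of $H^{\mathrm{even}}(X,\mathbb{Z})$, hence equal to it'' cannot be run: the span is a different, isometric copy of $\mathrm{II}^{20,4}$ inside $H^{\mathrm{even}}(X,\mathbb{R})$, not a sublattice of the naive one.

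Two smaller points. First, even where a full-rank unimodular sublattice argument does apply, the correct justification is the index--discriminant relation $\mathrm{disc}(L')=[L:L']^2\,\mathrm{disc}(L)$, not Prop.~\ref{prop:2}: uniqueness up to isometry does not by itself force a sublattice to coincide with the ambient lattice. Second, the discriminant of $\sqrt{|G|}\,H^2(T^4,\mathbb{Z})\cong U(|G|)^{\oplus 3}$ is $|G|^6$, not $|G|^4$, which suggests the gluing bookkeeping in your third paragraph has not actually been carried out. The fix is to drop the integrality step entirely and instead compute the Gram matrix of the proposed generating set directly (using the discriminant forms of $\Pi_{|G|}$, $\CE_{|G|}$ and $U(|G|)^{\oplus 3}$ and the defining property of $B_{|G|}$), verify that it is even with determinant $\pm 1$, and only then invoke the classification to identify the result with $\Gamma^{20,4}$.
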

\begin{thm}[Theorem 3.3, \cite{Wen}]
$G=\mathbb{Z}_M, M\in\{2,3,4,6\}$ orbifolding on $T^4$ induces a map on the geometric interpretation as
\be\label{eq:3.34}
(\Sigma_T,V_T,B_T)\longrightarrow(\Sigma,V,B)\,,
\ee
where $\Sigma$ is found as described in (\ref{eq:3.29}), the volume is $V=\frac{V}{|G|}$, and the background is $B=\frac{1}{\sqrt{G}}B_T+\frac{1}{|G|}B_{|G|}-2\hat v$.
\end{thm}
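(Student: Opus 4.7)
The plan is to run the geometric interpretation map (3.18) backward: starting from the torus data $(\Sigma_T, V_T, B_T)$, I would write out the corresponding 4-plane $x_T \subset H^{\mathrm{even}}(T^4, \mathbb{R})$ explicitly in the torus basis, push it forward through the isometric embedding $\sqrt{|G|}\,\hat\pi$ into $H^{\mathrm{even}}(X,\mathbb{R})$, and then rewrite the image in the K3 basis $\{M_{|G|}, \Pi_{|G|}, \hat v, \hat v^0\}$ so that it again has the normal form (3.18). The triple $(\Sigma, V, B)$ can then be read off by matching coefficients.

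First, using (3.18) on $T^4$, the plane $x_T$ is spanned by $\xi(\Sigma_T) = \Sigma_T - \langle \Sigma_T, B_T\rangle v$ together with $\xi_4^T = v^0 + B_T + (V_T - \tfrac{1}{2}\langle B_T, B_T\rangle)\,v$. For the $\Sigma_T$ factor, the self-dual part lives in $H^2$ and by (3.29)--(3.30) its image under $\sqrt{|G|}\,\hat\pi$ is exactly the $\Sigma$ produced by the embedding $\sqrt{|G|}H^2(T^4,\mathbb{Z})\hookrightarrow H^2(X,\mathbb{Z})$. This immediately identifies the first entry of the triple, so the remaining work concerns only the single generator $\xi_4^T$.

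The key step is to translate between the torus $H^0 \oplus H^4$ generators $\{v, v^0\}$ and the K3 generators $\{\hat v, \hat v^0\}$. By definition $\hat v = \sqrt{|G|}\,v$, so $v$ pushes forward cleanly; however $\sqrt{|G|}\,v^0$ is not equal to $\hat v^0$ because $\hat v^0$ is shifted by the orbifold $B$-field $B_{|G|}$ that encodes the exceptional divisors. Inverting the defining relation
\[
\hat v^0 = \tfrac{1}{|G|} v^0 - \tfrac{1}{|G|} B_{|G|} - \tfrac{\langle B_{|G|}, B_{|G|}\rangle}{2|G|^2}\sqrt{|G|}\,v
\]
expresses $v^0$ in terms of $\hat v^0$, a piece $B_{|G|}/|G|$ in $H^2$, and a multiple of $\hat v$. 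Substituting this into the pushforward of $\xi_4^T$ and regrouping into the K3 normal form $\hat v^0 + B + (V - \tfrac{1}{2}\langle B, B\rangle)\hat v$ then identifies $V$ as the coefficient of $\hat v$ (yielding $V = V_T/|G|$, as claimed) and $B$ as the sum of a rescaled torus contribution $B_T/\sqrt{|G|}$ and the orbifold shift $B_{|G|}/|G|$.

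The main obstacle will be the bookkeeping of $|G|$ versus $\sqrt{|G|}$ factors, since the pushforward $\pi_*$ multiplies the intersection form on $H^2(T^4,\mathbb{Z})^G$ by $|G|$ while the isometric identification of lattices requires scaling vectors by $\sqrt{|G|}$, and the two must be combined consistently. One also has to verify that the coefficient of $\hat v$ reassembles into $V - \tfrac{1}{2}\langle B, B\rangle$ with the $B$ given above; this relies on the orthogonality $B_{|G|} \perp \sqrt{|G|}\,H^2(T^4,\mathbb{Z})$ (the exceptional divisors meet the torus cycles trivially), which kills the cross-term $\langle B_T, B_{|G|}\rangle$ and leaves $\langle B, B\rangle$ as the clean sum of the two rescaled pieces.
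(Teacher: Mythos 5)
First, a point of orientation: the paper does not prove this statement at all --- it is imported verbatim as Theorem 3.3 of \cite{Wen} --- so what you have written is really a reconstruction of Wendland's argument. Your overall strategy is the correct one and is essentially hers: write $x_T$ in the normal form \eqref{eq:3.18}, push it through the isometric embedding, invert the defining relation for $\hat v^0$, and match coefficients against the K3 normal form. The identification of $\Sigma$ via \eqref{eq:3.29} and the vanishing of the cross term $\la B_T,B_{|G|}\ra$ by orthogonality of the exceptional divisors to $\sqrt{|G|}H^2(T^4,\mathbb{Z})$ are both right.

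However, your final matching step contains a genuine error: you drop the $-2\hat v$ term from $B$, and without it your bookkeeping cannot return $V=V_T/|G|$. Normalizing the pushforward of $\xi_4^T$ so that the coefficient of $\hat v^0$ is $1$, and substituting $v^0=|G|\hat v^0+B_{|G|}+\tfrac{\la B_{|G|},B_{|G|}\ra}{2|G|}\hat v$, the coefficient of $\hat v$ comes out to be
\be
\frac{V_T}{|G|}-\frac{\la B_T,B_T\ra}{2|G|}+\frac{\la B_{|G|},B_{|G|}\ra}{2|G|^2}\,,
\ee
whereas equating this to $V-\tfrac12\la B,B\ra$ with $B=\tfrac{1}{\sqrt{|G|}}B_T+\tfrac{1}{|G|}B_{|G|}$ purely in $H^2$ forces $V=\tfrac{V_T}{|G|}+\tfrac{\la B_{|G|},B_{|G|}\ra}{|G|^2}$. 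One checks from the defining condition on $B_{|G|}$ that $\la B_{|G|},B_{|G|}\ra=-2|G|^2$ for all $M\in\{2,3,4,6\}$ (e.g.\ for $M=2$, $B_{|G|}=-\tfrac12\sum_s E_s$ over $16$ fixed points gives $-8=-2\cdot 4$; for $M=3$, $-18=-2\cdot 9$), so your matching actually yields $V=V_T/|G|-2$, not $V_T/|G|$. The missing $-2\hat v$ in $B$ is precisely what repairs this: since $\hat v$ is null and orthogonal to $H^2$, adding $-2\hat v$ to $B$ leaves $\la B,B\ra$ unchanged but shifts the explicit $\hat v$-coefficient of $\xi_4$ by $-2$, absorbing the $\la B_{|G|},B_{|G|}\ra/|G|^2$ correction and restoring $V=V_T/|G|$. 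To complete the proof you must either carry this term from the start or prove the identity $\la B_{|G|},B_{|G|}\ra=-2|G|^2$ and exhibit the cancellation explicitly; the claim in your last paragraph that the $\hat v$-coefficient ``reassembles'' cleanly with your $B$ is exactly the step that fails.
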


In our Mathematica package, we have included a function that computes the lattice $H^{\text{even}}(X,\mathbb{Z})$ and returns the Gram matrix of the negative definite lattice $x^\perp \cap H^{\text{even}}(X,\mathbb{Z})$ by
\begin{verbatim}
    OrbifoldLattice[L,B,M]
\end{verbatim}
where $L$ is the torus lattice, $B$ is an antisymmetric matrix with elements given in terms of $dx_i\wedge dx_j$,\footnote{The basis change to $d\lambda_i \wedge d\lambda_j$ given in \eqref{eq:3.28} is handled by the Mathematica program.} and $M$ is the order of the cyclic orbifolding symmetry taking values in $\{2,3,4\}$. Note that for $M=3$ and $M=4$, both the choice and the order of basis for the lattice $L$ matters, as the orbifold symmetry generator $g$ should act on the coordinates as
\begin{align}
    g \in \mathbb{Z}_3 & : \, (\lambda_1,\lambda_2,\lambda_3,\lambda_4)\mapsto (\lambda_2-\lambda_1,-\lambda_1,-\lambda_4,\lambda_3-\lambda_4)\,,\\
    g \in \mathbb{Z}_4 & : \, (\lambda_1,\lambda_2,\lambda_3,\lambda_4)\mapsto (\lambda_2,-\lambda_1,-\lambda_4,\lambda_3)\,.
\end{align}
Since only the $\mathbb{Z}_M$-invariant part of cohomology is sent to the K3 cohomology, we need $B\in H^2(T,\mathbb{R})^{\mathbb{Z}_M}$. More specifically,
\begin{align}
\begin{split}
    \mathbb{Z}_3 : \, B \in \mathrm{span}_\mathbb{R} (d\lambda_1\wedge d\lambda_2, 
    d\lambda_3 \wedge d\lambda_4, d\lambda_1\wedge d\lambda_3+d\lambda_4\wedge d\lambda_2, \\
    -d\lambda_1\wedge d\lambda_3 + d\lambda_1\wedge d\lambda_4+d\lambda_2\wedge d\lambda_3 )\,,
\end{split}
\end{align}
\begin{align}
\begin{split}
    \mathbb{Z}_4 : \, B \in \mathrm{span}_\mathbb{R} (d\lambda_1\wedge d\lambda_2, d\lambda_3\wedge d\lambda_4,d\lambda_1 \wedge d\lambda_4 +d\lambda_2\wedge d\lambda_3, \\
    d\lambda_1\wedge d\lambda_3 + d\lambda_4 \wedge d\lambda_2)\,.
\end{split}
\end{align}
This is a condition required for consistent orbifolding. 

\begin{table}[t]
 \begin{center}
 \begin{tabular}{||c c c||} 
 \hline
 HM Lattice & $\CN=(4,4)$ Symmetry Group & Torus Model \\ [0.5ex] 
 \hline\hline
 HM99 & $2^8\cln M_{20} $& $D_4/\mathbb{Z}_2$\\ 
 \hline
 HM100 & $2^{1+8}\cln A_5$ & $A_4/\mathbb{Z}_2$ \\
 \hline
 HM101 & $3^4\cln A_6$ & $D_4/\mathbb{Z}_3$ \\
 \hline
 HM103 & $2^{1+8}\cln S_4$ & $A_1A_3/\mathbb{Z}_2$ \\
 \hline
 HM104 & $2^{1+8}\cln 2.(3\times 3)$ & $A_2^2/\mathbb{Z}_2$ \\
 \hline
 HM105 & $2^{1+8}\cln 2\times S_3$ & $A_1^2A_2/\mathbb{Z}_2$ \\
 \hline
 HM107 & $2^{1+8}\cln 2^3$ &$ A_1^4/\mathbb{Z}_2$ \\
 \hline
  HM107 & $2^{1+8}\cln 2^3$ & $D_4/\mathbb{Z}_4$ \\
 \hline
 HM109 & $3^{1+4}\cln Q_8$ & $A_1^2A_2/\mathbb{Z}_3$ \\
 \hline
 HM114 & $3^{1+4}\cln 4$ & $A_2^2/\mathbb{Z}_3$ \\
 \hline
  HM116 & $(2\times 4^2)\cln S_4$ & $A_1^2A_2/\mathbb{Z}_4$ \\
 \hline
 HM116 & $(2\times 4^2)\cln S_4$ & $A_1^4/\mathbb{Z}_4$ \\[1ex] 
 \hline
\end{tabular}
\end{center}
 \caption{First column lists all the CSS models for which we have found a geometric orbifold dual. For each of them, in the second column we list the symmetry group fixing the $\CN=(4,4)$ algebra, i.e. $\Fix(\mathrm{HM}\#)$. In the last column, we provide the corresponding geometric orbifold model using the torus models from section 4.4 of \cite{Vol}. }\label{Table:1}
\end{table}

In Table \ref{Table:1}, we report the geometric $\mathbb{Z}_M$ orbifolds with $M\in\{2,3,4\}$ of the torus models in \cite{Vol} whose lattices $x^\perp \cap \Gamma^{20,4}$ match with a H{\"o}hn-Mason coinvariant lattice $\mathfrak{F}_L^\perp$, hence finding an explicit duality between twelve type IIA K3 models and heterotic CSS models. We point out the following new results. We recognize two intersections in the moduli space of torus orbifolds from the table: $A_1^4/\mathbb{Z}_2$ and $D_4/\mathbb{Z}_4$, as well as $A_1^2A_2/\mathbb{Z}_4$ and $A_1^4/\mathbb{Z}_4$ meet. Comparing with section 4 of \cite{GHV}, we see that $D_4/\mathbb{Z}_3$ meets the Gepner model $(1)^6$. We have also further specified the structure of $\CN=(4,4)$ symmetry groups of HM100, HM103, HM104, HM105, HM109, HM114 using arguments similar to those in section \ref{sec:Z3 model} and section \ref{sec:Z2 model}.

Lastly, we mention a method for constructing more candidate torus models using the even cohomology formulation. Given an even lattice $\mathfrak{L}$ of rank 4, we can use the gluing construction described in Lemma \ref{lem:gluing lemma} on another copy of the lattice to get the even unimodular lattice
\be\label{eq:3.41}
\Gamma^{4,4}=\coprod_{[r]\in \CD(\mathfrak{L})}(r,r)+\mathfrak{L}(-1) \oplus \mathfrak{L}\,,
\ee
which is necessarily isometric to $\mathrm{II}^{4,4}\cong H^{\text{even}}(T^4,\mathbb{Z})$. Therefore, we can simply define $x$ to be the $\mathbb{R}$-span of $(0,\mathfrak{L})\subset\Gamma^{4,4}$. Let the bilinear form on some basis $g_i$ of $\Gamma^{4,4}$ be given by the matrix $G$ with $G_{ij}:=\la g_i,g_j\ra$. By our arguments, there should be a unimodular matrix $W$ acting as a change of basis such that \be 
W^T G W = \begin{pmatrix}
0 & 1 \\
1 & 0
\end{pmatrix}^{\oplus 4}\,.\ee 
Then $x_T$ in terms of $d\lambda_i$ is expressed as the $\mathbb{R}$-span of $W^{-1}(0,\mathfrak{L})$.

We determine $W$ with the following recipe. Choose a primitive element $w_1$ such that $\la w_1 , w_1 \ra =0$. We know that $w_1\in \mathfrak{L}\cap w_1^\perp$ by construction, so we can project it out by considering $W_1:=(\mathfrak{L}\cap w_1^\perp)/w_1$. More concretely, we choose a set of basis for $\mathfrak{L}\cap w_1^\perp$ containing $w_1$ and then we delete it from the set. We define $w_2$ as the element satisfying $w_2\perp W_1$ and $\la w_1,w_2 \ra=1$. We can now see that $w_1,w_2$ form a standard basis of $U$ and also that $\mathrm{span}_\mathbb{Z}(w_1,w_2)=U \perp W_1$. We continue iteratively until we find $w_1,w_2,\dots,w_7,w_8$ so that they form a standard basis of $U^{\oplus 4}$. Then $w_i$ constitute the columns of $W$.


\section{Conclusions and open problems}\label{sec:Conclude}

We have a presented a number of explicit constructions of dual pairs of string theories which have supersymmetry preserving symmetry groups that are subgroups of the Conway group. In each example we were able to not only match symmetry groups but we also matched points in
the moduli space of $T^4$ heterotic and K3 type II compactifications. We also provide Mathematica code that automates the required computations. 

There are two interesting open problems that we feel it would be interesting to address. On the heterotic side the full
symmetry group of the $T^4$ compactification involves an extension of the lattice automorphism group by $\IZ_2^{24}$. This
occurs because associativity of the vertex operator OPEs for vertex operators that create states with lattice momenta requires the addition
of cocycle factors. Essentially one must work with a projective representation of the Narain lattice. This extension is only visible when we study products of vertex operators involving lattice momenta, that is when we look at the addition law for lattice momenta. On the type II side
this corresponds to addition of D-brane charge and so is not visible in perturbation theory. Duality requires that there exist a projective representation of the cohomology lattice on the type II side that governs addition of D-brane charge. While it is known that in general D-brane charges should be described by K-theory rather than ordinary cohomology \cite{Witten:1998cd}, it is not clear to us that the origin of these $\IZ_2$ factors has been understood in the literature. In particular it does not follow from twisted K-theory since $H^3(X, \IZ)$ is trivial for K3 surfaces $X$. 

A second interesting question which this work might help to address is the claim that all exact symmetries in string theory are gauge
symmetries. There are a variety of arguments behind this claim, but, as far as we know, none of them really address the question of whether discrete symmetries of string theory in flat space are always gauge symmetries. This work provides dual pairs of string models with
large discrete symmetry groups and they seems like a good starting point to investigate how one would prove or disprove that these discrete symmetries are gauge symmetries. 

\appendix
\section{Details of CSS lattice constructions}\label{app:A}

As a supplement to section \ref{sec:Construct Theory}, in our Mathematica package we provide
\begin{itemize}
    \item the Gram matrix $G$ of the lattice HM\#,
    \item the lattices $\mathfrak{F}_L,\mathfrak{F}_R$ embedded in $E_8,\Lambda$, their orthogonal complements $\mathfrak{F}_L^\perp,\mathfrak{F}_R^\perp$, and their duals $\left(\mathfrak{F}_L^\perp\right)^\vee,\left(\mathfrak{F}_R^\perp\right)^\vee$,
    \item the discriminant group $\CD(\mathfrak{F})$ of the lattices, which is isomorphic for all lattices considered,
    \item isometries between the discriminant groups $\CD(\mathfrak{F}_L^\perp),\CD(\mathfrak{F}_L),\CD(\mathfrak{F}_R),\CD(\mathfrak{F}_R^\perp)$ in terms of the generator glue vectors $c_i,d_i,e_i,f_i$ of the discriminant groups respectively,
    \item the generator glue vectors $(r,\psi(r))$ as in (\ref{eq:2.31}).
\end{itemize}
The glue vectors are generated from $(c_i,\psi(c_i))$ with $[c_i]$ a generator of $\mathbb{Z}_{d_i}\subset \CD(\mathfrak{F}_R^\perp)= \prod_j \mathbb{Z}_{d_j}$, and $\psi$ is found by chasing the isometries provided.

As a supplement to section \ref{sec:K3 lattice}, we provide a function which returns the Gram matrix of the lattice $x^\perp \cap H^{\mathrm{even}}(x,\mathbb{Z})$ given a torus model. For a complete list of commands with explanations, the user can run
\begin{verbatim}
    ?CSSCompactifications`*
\end{verbatim}
after the package is installed as instructed in the Mathematica notebook.

In the rest of the appendix, we will describe our procedure in detail for computing the CSS lattice $\Gamma$. We give an overview of the steps:
\begin{steps}
    \item Find an isometric embedding $\mathfrak{F}_R\subset E_8$ of $\mathfrak{F}_L$ by computing the Gram matrices of sublattices of the same rank in $E_8$.
    \item Check if the embedding $\mathfrak{F}_R\subset E_8$ is primitive.
    \item Put all the lattices $\mathfrak{F}_L,\mathfrak{F}_L^\perp,\mathfrak{F}_R,\mathfrak{F}_R^\perp$ in Smith Normal Form. Get the discriminant group by (\ref{eq:2.16}) and the generators of each lattice's discriminant groups by (\ref{eq:2.14}).
    \item Find isometries $\CD(\mathfrak{F}_L^\perp)\rightarrow \CD(\mathfrak{F}_L)$ and $\CD(\mathfrak{F}_R)\rightarrow\CD(\mathfrak{F}_R^\perp) $. (Recall that the isometry $\CD(\mathfrak{F}_L)\rightarrow \CD(\mathfrak{F}_R)$ is induced by $\mathfrak{F}_L\cong \mathfrak{F}_R$.)
    \item Construct the generators of $\Gamma$ by considering the Smith Normal Form basis $\tilde V_L$ of $\mathfrak{F}_L^\perp$. Specifically, take the elements $(v^L_i,0)$ for $d_i=1$ and $(v^L_i,\psi(v^L_i))$ for $d_i\neq 1$ where $v_i^L$ denotes the $i$th basis vector in $\tilde V_L$.
\end{steps}
\subsection{Finding an embedding in \texorpdfstring{$E_8$}{E8}}

To find an embedding of the sublattice $\mathfrak{F}_L$ in $E_8$ with the Gram matrix $G$, we use two methods according to how large $\mathfrak{F}_L$ is. 

The first method is applicable if all the basis vectors in $\mathfrak{F}_L$ satisfy $\|v\|^2\leq 8$, i.e. the diagonal entries of $G$ are $G_{ii}\leq 8$. The advantage of this method is that it is faster than brute force computation, but it is not applicable for all sublattices $\mathfrak{F}_L$.

We calculate all the inner products of vectors in $E_8$ with norm $\|v\|^2\leq 8$ with each other, and store them in a database. We define a function 
\begin{verbatim}
    VectorData[v,m,x]
\end{verbatim}
which takes in a vector \verb|v|, and returns all the vectors from the database which have inner product with \verb|v| equal to \verb|x| with \verb|x|$=0,\pm1,\pm2,\pm3,\pm4$ and has norm squared \verb|m|.

Now, to find a set of vectors in $E_8$ with the same Gram matrix $G$, we implement the following algorithm:
\begin{steps}
\item Choose a random vector $v_1$ with $\|v_1\|^2=G_{11}$ and add it to the list.
\item Chosen vector $v_{n-1}$, choose a random vector $v_n$ from the set 
$$\verb|VectorData|(v_1,G_{nn},G_{1n}) \cap \dots \cap \verb|VectorData|(v_{n-1},G_{nn},G_{(n-1)n})$$ 
and add it to the list. If the set is empty, choose a different $v_{n-1}$. If all such sets are empty, choose a different $v_{n-2}$ and so on.
\item Repeat until the list is complete.
\end{steps}
This way, we can find the embedding quickly if it exists.

We use the second method only if there is a basis vector with $\|v\|^2>8$ in the Gram matrix of $\mathfrak{F}_L$. It is a brute force computation to find a vector set with the desired inner products.
\begin{steps}
    \item Generate a random vector in the lattice such that $v_1=\sum c_i f_i, c_i\leq k$, where $f_i$ are a basis for $E_8$ and $k$ is a fixed upper bound for the coefficients according to $\|v_1\|^2=G_{11}$.
    \item If vector $v_1$ has norm $G_{11}$, add it to the list, and generate random vector $v_2$.
    \item Chosen $v_1,\dots,v_{n-1}$, generate a random vector $v_n$. If $v_n$ satisfies $v_i \cdot v_n = G_{in}$ for $ i \leq n$, add this vector to the list. If there are no such $v_n$, choose a different $v_{n-1}$ and so on.
    \item Repeat until the list is complete.
\end{steps}

For $n\geq 5$ this problem becomes computationally complex. Therefore, for sublattices with higher ranks, we compute the set of all possible triples $(v_1,v_2,v_3)$ that have inner products given by $G_{ij}$ with $i,j \leq 3$, and try random vectors for all the others until we get the desired embedding.

\subsection{Primitivity}

Once we find an embedding in $E_8$, we would like to find out whether this sublattice is primitive or not. Recall that a sublattice $\mathfrak{G}$ is primitive in $\mathfrak{L}$ if $\mathfrak{L}/\mathfrak{G}$ is free. The following is a more computer friendly definition we use.

\begin{prop}
Let $\mathfrak{G}$ be a sublattice in $\mathfrak{L}$. Let $\{v_1, \dots, v_k\}$ be the basis of $\mathfrak{G}$ and $\{f_1,\dots,f_n\}$ be the basis of $\mathfrak{L}$. Write $v_i = \sum c_{ij} f_j$, or in matrix form, $V=CF$ where $v_i$ and $f_i$ are rows of $V$ and $F$ respectively. 

Then $\mathfrak{G}$ is primitive if the GCD of all $k \times k$ minors of C is equal to 1.
\end{prop}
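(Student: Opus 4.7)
The plan is to reduce everything to Smith Normal Form, as in Proposition \ref{prop:Smith}, applied to the coefficient matrix $C$. Specifically, there exist unimodular matrices $P\in GL_k(\IZ)$ and $Q\in GL_n(\IZ)$ such that $D:=PCQ$ is a $k\times n$ matrix whose only nonzero entries are $d_1,\dots,d_k$ on the main diagonal, with $d_i\mid d_{i+1}$. Since $P$ and $Q$ are unimodular, the row vectors of $\tilde V:=PV$ form a new $\IZ$-basis of $\mathfrak{G}$ and the row vectors of $\tilde F:=Q^{-1}F$ form a new $\IZ$-basis of $\mathfrak{L}$. The relation $V=CF$ then transforms into $\tilde V = D\tilde F$, so each generator of $\mathfrak{G}$ becomes $\tilde v_i = d_i\,\tilde f_i$ for $i=1,\dots,k$, completely decoupled from $\tilde f_{k+1},\dots,\tilde f_n$.

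Using these adapted bases, the quotient is transparent:
\be
\mathfrak{L}/\mathfrak{G} \;\cong\; \bigoplus_{i=1}^{k} \IZ/d_i\IZ \;\oplus\; \IZ^{n-k}\,.
\ee
Thus $\mathfrak{L}/\mathfrak{G}$ is free (i.e.\ $\mathfrak{G}$ is primitive in $\mathfrak{L}$) if and only if every invariant factor $d_i$ equals $1$, which is equivalent to $d_1d_2\cdots d_k = 1$.

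The remaining ingredient is the standard fact that the $j$-th determinantal divisor of $C$, defined as the GCD of all $j\times j$ minors of $C$, is invariant under left/right multiplication by unimodular matrices and therefore equals the corresponding determinantal divisor of $D$. For $j=k$, the only nonvanishing $k\times k$ minors of $D$ are obtained by selecting the first $k$ columns (and permutations thereof), yielding $\pm d_1d_2\cdots d_k$. Hence $\gcd$ of all $k\times k$ minors of $C$ equals $d_1d_2\cdots d_k$.

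Combining the two observations gives the equivalence: the GCD of all $k\times k$ minors of $C$ equals $1$ iff all $d_i=1$ iff $\mathfrak{L}/\mathfrak{G}$ is torsion-free iff $\mathfrak{G}$ is primitive. The only mildly delicate step is the invariance of determinantal divisors under unimodular row/column operations, but this follows from the Cauchy-Binet formula (each $k\times k$ minor of $PCQ$ is an integer linear combination of the $k\times k$ minors of $C$, and vice versa using $P^{-1}$ and $Q^{-1}$), so no real obstacle arises.
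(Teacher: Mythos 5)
Your proof is correct and in fact establishes the stronger ``if and only if'' form of the statement, but it takes a genuinely different route from the paper. The paper imports from \cite{DQ} the completion lemma --- a $k\times n$ integer matrix whose $k\times k$ minors have GCD equal to $1$ can be completed to an $n\times n$ unimodular matrix $M$ --- and then argues that the last $n-k$ rows of the resulting basis $MF$ of $\mathfrak{L}$ descend to a basis of $\mathfrak{L}/\mathfrak{G}$, so the quotient is free. You instead put $C$ into rectangular Smith Normal Form $D=PCQ$ (the natural extension of Prop.~\ref{prop:Smith}), read off $\mathfrak{L}/\mathfrak{G}\cong\bigoplus_{i=1}^{k}\IZ/d_i\IZ\oplus\IZ^{n-k}$ directly, and identify the GCD of the $k\times k$ minors with the product $d_1\cdots d_k$ of invariant factors via the Cauchy--Binet invariance of determinantal divisors. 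The two arguments are cousins (the completion lemma is itself typically proved via Smith Normal Form), but yours is self-contained modulo standard linear algebra over $\IZ$, yields the converse implication for free, and exhibits exactly what obstructs primitivity, namely the torsion $\bigoplus_i\IZ/d_i\IZ$; the paper's is shorter at the price of using the completion lemma as a black box. Two cosmetic points: since any $k\times k$ minor of $D$ must use all $k$ rows, there is exactly one nonvanishing such minor, $d_1\cdots d_k$ (no sign ambiguity and no ``permutations'' to worry about); and the $d_i$ are all nonzero because the $v_i$ are a basis of $\mathfrak{G}$, so $C$ has rank $k$ --- worth one sentence so that the quotient computation is airtight.
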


\begin{proof}
We use a fact from \cite{DQ}: the $k \times n$ matrix $C$ can be completed to a $n \times n$ unimodular matrix if and only if the GCD of all $k \times k$ minors of C is equal to 1. 

Suppose $\mathfrak{L}/\mathfrak{G}$ is not free. Let $D$ be the $(n-k) \times n$ matrix that completes $C$ to a unimodular matrix $M$. Since $M F$ is a basis for of $\mathfrak{L}$ and $C F$ is a basis for $\mathfrak{G}$, we have $D F$ as a basis for $\mathfrak{L}/\mathfrak{G}$. If $\mathfrak{L}/\mathfrak{G}$ is not free, then some linear combination of the row vectors in $D F$ is in $\mathfrak{G}$. But that means vectors in $D$ and $C$ are not linearly independent, therefore together they cannot form a unimodular matrix, leading to a contradiction.
\end{proof}

\subsection{Isometries between the discriminant groups}

We would like to find an isometry 
\be
(\mathcal{D}(\mathfrak{L}),\bar q)\rightarrow (\mathcal{D}(\mathfrak{L}^{\perp}),-\bar q)\,.
\ee
Since $\CD(\mathfrak{L})\cong\CD(\mathfrak{L}^\perp)$, we choose an endomorphism $\psi$, We first check if it is an isomorphism, and then check if $\bar q=-\bar q \psi$.

We are looking for the automorphisms of $\mathbb{Z}_{d_1} \times \dots \times \mathbb{Z}_{d_k}$. To characterize such an automorphism, it is enough to specify where the generators map to. We collect all the elements with order $d_i$ in set $D_i$, and map each $1_{d_i}$ to some element in $D_i$. The mapping will take the form
\be 1_{d_i} \mapsto a_{i1} 1_{d_1} + \dots + a_{ik} 1_{d_k}\,.\ee

A priori, such a map is an endomorphism. There are two methods to check if it is an automorphism. First, we can check if each $1_{d_i}$ is in the image of the mapping. We characterize an endomorphism by putting the coefficients in a matrix
\be A= 
\begin{pmatrix}
a_{11} & \dots & a_{1k}\\
\vdots \\
a_{k1} & \dots & a_{kk}
\end{pmatrix}\,.\ee
We consider the inverse
\be A^{-1} = \begin{pmatrix}
\frac{p_{11}}{q_{11}} & \dots & \frac{p_{1k}}{q_{1k}}\\
\vdots \\
\frac{p_{k1}}{q_{k1}} & \dots & \frac{p_{kk}}{q_{kk}}
\end{pmatrix}\,,\ee
with the matrix entries rational. We take the $i$th row, multiply it by $q=\text{lcm}(q_{i1},\dots,q_{ik})$ to clear out the denominator, and get
\be q  1_{d_i} = n_{i1} 1_{d_1} + \dots + n_{ik} 1_{d_k}\,.\ee
If $q$ is coprime with $d_i$, then $1_{d_i}$ is in the image of the endomorphism. If this is true for all $i$, then this mapping is an automorphism.

The second method is a simpler computation. For each endomorphism
\be \psi:\mathbb{Z}_{d_1} \times \dots \times \mathbb{Z}_{d_k}\rightarrow \mathbb{Z}_{d_1} \times \dots \times \mathbb{Z}_{d_k}\,,\ee
we calculate the number of elements in the image, and then we calculate the order of each element in the image. This characterizes the image as a unique finite abelian group, which we compare with the preimage to see if they are the same group. 

Finally, we check if the isomorphism $\psi$ is an isometry by the straightforward computation
\be \bar q \overset{?}{=}-\bar q \psi\,,\ee
and repeat until we obtain an isometry.

\section{Lattice construction data for HM44}\label{app:B}

In this appendix, we show explicitly how CSS lattices are constructed, using HM44 as our guiding example. Running \begin{verb} CSSLatticeConstruction[44] \end{verb} in our Mathematica package reproduces the content we present here. 

We use the rows of the following matrix as basis vectors for the $E_8$ lattice in $\mathbb{R}^8$:
\begin{align}
[E_8]=\left(
\begin{array}{cccccccc}
2 & 0 & 0 & 0 & 0 & 0 & 0 & 0 \\
-1 & 1 & 0 & 0 & 0 & 0 & 0 & 0 \\
0 & -1 & 1 & 0 & 0 & 0 & 0 & 0 \\
0 & 0 & -1 & 1 & 0 & 0 & 0 & 0 \\
0 & 0 & 0 & -1 & 1 & 0 & 0 & 0 \\
0 & 0 & 0 & 0 & -1 & 1 & 0 & 0 \\
0 & 0 & 0 & 0 & 0 & -1 & 1 & 0 \\
\frac{1}{2} & \frac{1}{2} & \frac{1}{2} & \frac{1}{2} & \frac{1}{2} & \frac{1}{2} & \frac{1}{2} & \frac{1}{2} \\
\end{array}
\right)\,.
\end{align}
In our notation, we put brackets around the lattice to denote its basis vectors. For the Leech lattice $\Lambda$, the basis vectors in $\mathbb{R}^{24}$ are given as the rows of the matrix
\begin{align}
[\Lambda]=c\left(
\begin{array}{cccccccccccccccccccccccc}
8 & 0 & 0 & 0 & 0 & 0 & 0 & 0 & 0 & 0 & 0 & 0 & 0 & 0 & 0 & 0 & 0 & 0 & 0 & 0 & 0 & 0 & 0 & 0 \\
4 & 4 & 0 & 0 & 0 & 0 & 0 & 0 & 0 & 0 & 0 & 0 & 0 & 0 & 0 & 0 & 0 & 0 & 0 & 0 & 0 & 0 & 0 & 0 \\
4 & 0 & 4 & 0 & 0 & 0 & 0 & 0 & 0 & 0 & 0 & 0 & 0 & 0 & 0 & 0 & 0 & 0 & 0 & 0 & 0 & 0 & 0 & 0 \\
4 & 0 & 0 & 4 & 0 & 0 & 0 & 0 & 0 & 0 & 0 & 0 & 0 & 0 & 0 & 0 & 0 & 0 & 0 & 0 & 0 & 0 & 0 & 0 \\
4 & 0 & 0 & 0 & 4 & 0 & 0 & 0 & 0 & 0 & 0 & 0 & 0 & 0 & 0 & 0 & 0 & 0 & 0 & 0 & 0 & 0 & 0 & 0 \\
4 & 0 & 0 & 0 & 0 & 4 & 0 & 0 & 0 & 0 & 0 & 0 & 0 & 0 & 0 & 0 & 0 & 0 & 0 & 0 & 0 & 0 & 0 & 0 \\
4 & 0 & 0 & 0 & 0 & 0 & 4 & 0 & 0 & 0 & 0 & 0 & 0 & 0 & 0 & 0 & 0 & 0 & 0 & 0 & 0 & 0 & 0 & 0 \\
2 & 2 & 2 & 2 & 2 & 2 & 2 & 2 & 0 & 0 & 0 & 0 & 0 & 0 & 0 & 0 & 0 & 0 & 0 & 0 & 0 & 0 & 0 & 0 \\
4 & 0 & 0 & 0 & 0 & 0 & 0 & 0 & 4 & 0 & 0 & 0 & 0 & 0 & 0 & 0 & 0 & 0 & 0 & 0 & 0 & 0 & 0 & 0 \\
4 & 0 & 0 & 0 & 0 & 0 & 0 & 0 & 0 & 4 & 0 & 0 & 0 & 0 & 0 & 0 & 0 & 0 & 0 & 0 & 0 & 0 & 0 & 0 \\
4 & 0 & 0 & 0 & 0 & 0 & 0 & 0 & 0 & 0 & 4 & 0 & 0 & 0 & 0 & 0 & 0 & 0 & 0 & 0 & 0 & 0 & 0 & 0 \\
2 & 2 & 2 & 2 & 0 & 0 & 0 & 0 & 2 & 2 & 2 & 2 & 0 & 0 & 0 & 0 & 0 & 0 & 0 & 0 & 0 & 0 & 0 & 0 \\
4 & 0 & 0 & 0 & 0 & 0 & 0 & 0 & 0 & 0 & 0 & 0 & 4 & 0 & 0 & 0 & 0 & 0 & 0 & 0 & 0 & 0 & 0 & 0 \\
2 & 2 & 0 & 0 & 2 & 2 & 0 & 0 & 2 & 2 & 0 & 0 & 2 & 2 & 0 & 0 & 0 & 0 & 0 & 0 & 0 & 0 & 0 & 0 \\
2 & 0 & 2 & 0 & 2 & 0 & 2 & 0 & 2 & 0 & 2 & 0 & 2 & 0 & 2 & 0 & 0 & 0 & 0 & 0 & 0 & 0 & 0 & 0 \\
2 & 0 & 0 & 2 & 2 & 0 & 0 & 2 & 2 & 0 & 0 & 2 & 2 & 0 & 0 & 2 & 0 & 0 & 0 & 0 & 0 & 0 & 0 & 0 \\
4 & 0 & 0 & 0 & 0 & 0 & 0 & 0 & 0 & 0 & 0 & 0 & 0 & 0 & 0 & 0 & 4 & 0 & 0 & 0 & 0 & 0 & 0 & 0 \\
2 & 0 & 2 & 0 & 2 & 0 & 0 & 2 & 2 & 2 & 0 & 0 & 0 & 0 & 0 & 0 & 2 & 2 & 0 & 0 & 0 & 0 & 0 & 0 \\
2 & 0 & 0 & 2 & 2 & 2 & 0 & 0 & 2 & 0 & 2 & 0 & 0 & 0 & 0 & 0 & 2 & 0 & 2 & 0 & 0 & 0 & 0 & 0 \\
2 & 2 & 0 & 0 & 2 & 0 & 2 & 0 & 2 & 0 & 0 & 2 & 0 & 0 & 0 & 0 & 2 & 0 & 0 & 2 & 0 & 0 & 0 & 0 \\
0 & 2 & 2 & 2 & 2 & 0 & 0 & 0 & 2 & 0 & 0 & 0 & 2 & 0 & 0 & 0 & 2 & 0 & 0 & 0 & 2 & 0 & 0 & 0 \\
0 & 0 & 0 & 0 & 0 & 0 & 0 & 0 & 2 & 2 & 0 & 0 & 2 & 2 & 0 & 0 & 2 & 2 & 0 & 0 & 2 & 2 & 0 & 0 \\
0 & 0 & 0 & 0 & 0 & 0 & 0 & 0 & 2 & 0 & 2 & 0 & 2 & 0 & 2 & 0 & 2 & 0 & 2 & 0 & 2 & 0 & 2 & 0 \\
-3 & 1 & 1 & 1 & 1 & 1 & 1 & 1 & 1 & 1 & 1 & 1 & 1 & 1 & 1 & 1 & 1 & 1 & 1 & 1 & 1 & 1 & 1 & 1 \\
\end{array}
\right)\,,
\end{align}
where $c=1/\sqrt{8}$. 

Among rank 6 Höhn-Mason sublattices, only HM44 has an embedding $\mathfrak{F}_R$ in $E_8$. For the construction of the CSS lattice, we use its orthogonal complement, which is provided by the Mathematica package as
\begin{align}
[\mathfrak{F}_R^\perp]=\left(
\begin{array}{cccccccc}
0 & 1 & 1 & -2 & 0 & 1 & -3 & -2 \\
0 & 0 & 2 & -2 & 3 & -1 & -1 & -1 \\
\end{array}
\right)\,.
\end{align}
Similarly, we take the orthogonal complement of the corresponding Leech sublattice, which has basis vectors given as
\begin{align}
[\mathfrak{F}_L^\perp]=c\left(
\begin{smallmatrix}
0 & 0 & 0 & 2 & -2 & 0 & 0 & 0 & 2 & 0 & 0 & 0 & 0 & 0 & 0 & -2 & 0 & 0 & 2 & 0 & -2 & 0 & -2 & 2 \\
-1 & -1 & 1 & -1 & 1 & 1 & -1 & 1 & -1 & -1 & -1 & 1 & 1 & 1 & 1 & -1 & -1 & 1 & -1 & -1 & 1 & -1 & 1 & -3 \\
0 & 0 & 0 & 0 & 2 & 0 & 2 & 0 & 0 & 0 & 0 & 0 & -2 & 0 & -2 & 0 & 2 & 0 & 0 & 2 & 2 & 2 & 0 & 0 \\
-2 & 0 & 0 & -2 & 0 & 0 & 0 & 0 & 0 & 2 & -4 & 2 & 0 & 0 & 2 & 2 & 0 & 0 & 0 & 0 & 2 & 0 & 0 & -2 \\
-1 & -1 & -1 & -1 & -1 & -1 & -1 & -1 & 1 & 1 & -1 & -1 & 3 & -1 & 1 & 1 & -1 & -1 & 1 & 1 & 1 & 1 & -1 & -1 \\
1 & -1 & 1 & 1 & 1 & -1 & 1 & 1 & -1 & -1 & 1 & -1 & 1 & 1 & -1 & -3 & 1 & 1 & 1 & -1 & 1 & 1 & -3 & 3 \\
-1 & -1 & 1 & 1 & 1 & -1 & 1 & -1 & -1 & 1 & -1 & 1 & 1 & 1 & -1 & -1 & -1 & 1 & -1 & 1 & -1 & -1 & 1 & -3 \\
-1 & -1 & 1 & 1 & 1 & -1 & 1 & -1 & 1 & 1 & -1 & -1 & 1 & -1 & 1 & -1 & 1 & 1 & 3 & -1 & -1 & 1 & -1 & 1 \\
0 & 0 & 0 & 0 & 0 & -2 & 2 & 0 & 0 & 0 & -2 & 2 & 0 & 2 & 0 & -2 & 0 & 0 & 0 & 0 & 0 & 2 & -2 & 0 \\
0 & 0 & 0 & 0 & 0 & 0 & -2 & 2 & 0 & -2 & 2 & 0 & 2 & 0 & 2 & 0 & -2 & -2 & 0 & 0 & 0 & 0 & 0 & 0 \\
1 & 1 & -1 & -1 & 1 & -1 & 1 & -1 & 1 & 1 & 1 & 1 & 1 & -1 & -1 & 1 & -1 & -1 & -1 & 3 & -1 & 1 & 1 & -1 \\
0 & 0 & 0 & 2 & -2 & 0 & 0 & 0 & 0 & 0 & 2 & 0 & 0 & 2 & 0 & 0 & -2 & 0 & 0 & 0 & -2 & -2 & 2 & 0 \\
-1 & 1 & 1 & -1 & 1 & 1 & 1 & 1 & -1 & 1 & -1 & 1 & -1 & -1 & 1 & 1 & 1 & 1 & 1 & 1 & 3 & 1 & 1 & -1 \\
0 & 0 & 0 & -2 & 2 & 0 & 0 & 0 & 0 & 0 & -2 & 0 & 2 & 0 & 2 & 2 & -2 & 0 & 0 & 0 & 0 & 0 & 0 & -2 \\
1 & -1 & 1 & 1 & -1 & -1 & 1 & -1 & 1 & -1 & -1 & -1 & 1 & 1 & 1 & -1 & -1 & 1 & 1 & -3 & -1 & -1 & -1 & 1 \\
-1 & -1 & 1 & -1 & 1 & -1 & 1 & 1 & -1 & 1 & -1 & -1 & 1 & 1 & -1 & 1 & 1 & 1 & -1 & 1 & 3 & 1 & -1 & -1 \\
0 & 0 & 0 & 0 & 0 & 0 & 0 & 0 & 0 & 0 & 0 & 0 & 0 & 0 & 0 & 0 & 2 & -2 & 2 & 2 & 2 & 2 & -2 & 2 \\
1 & -1 & 1 & 1 & 1 & -1 & 1 & 1 & -1 & -1 & 1 & -1 & 1 & 1 & -1 & -3 & 1 & 1 & 1 & -1 & 1 & 1 & 1 & -1 \\
\end{smallmatrix}
\right)\,.
\end{align}

In order to glue the two together as in \eqref{eq:2.31}, we need the correspondence between their discriminant groups. Putting the lattices in Smith Normal Form as in \eqref{eq:2.15} immediately shows that the discriminant groups are isomorphic to $\mathbb{Z}_{10}\times\mathbb{Z}_{30}$. The generators that correspond to $1\in\mathbb{Z}_{10}$ and $1\in\mathbb{Z}_{30}$ are given for each sublattice as follows:
\begin{align}
\begin{split}
&[\mathcal{D}(\mathfrak{F}_L^\perp)]=\begin{pmatrix}
c_1 \\
c_2
\end{pmatrix}\\
&= c\left(
\begin{smallmatrix}
\frac{41}{10} & \frac{1}{10} & \frac{9}{10} & -\frac{3}{10} & \frac{1}{10} & \frac{13}{10} & -\frac{13}{10} & -\frac{21}{10} & -\frac{19}{10} & -\frac{1}{10} & -\frac{3}{2} & \frac{11}{10} & \frac{3}{2} & \frac{21}{10} & \frac{1}{10} & -\frac{9}{10} & \frac{31}{10} & -\frac{37}{10} & \frac{7}{2} & \frac{11}{10} & \frac{1}{10} & -\frac{3}{2} & \frac{3}{2} & -\frac{1}{2} \\
-\frac{7}{5} & -\frac{3}{5} & -\frac{33}{5} & \frac{34}{15} & 2 & -\frac{4}{3} & 1 & 2 & 0 & 1 & -\frac{8}{15} & \frac{2}{15} & -\frac{7}{3} & -4 & -\frac{14}{15} & -\frac{28}{15} & -\frac{27}{5} & \frac{8}{5} & -\frac{2}{15} & -\frac{38}{15} & 2 & -\frac{1}{3} & \frac{8}{15} & \frac{8}{15} \\
\end{smallmatrix}
\right)\,,
\end{split}\\
&[\mathcal{D}(\mathfrak{F}_R^\perp)]=\begin{pmatrix}
f_1 \\
f_2
\end{pmatrix}
=\left(
\begin{array}{cccccccc}
0 & 0 & \frac{1}{5} & -\frac{1}{5} & \frac{3}{10} & -\frac{1}{10} & -\frac{1}{10} & -\frac{1}{10} \\
0 & -\frac{1}{30} & \frac{1}{10} & -\frac{1}{15} & \frac{1}{5} & -\frac{1}{10} & \frac{1}{30} & 0 \\
\end{array}
\right)\,.
\end{align}
We found the following isometry between the two
\begin{align}
\bar{\psi}:\mathcal{D}(\mathfrak{F}_L^\perp)&\to \mathcal{D}(\mathfrak{F}_R^\perp)\\
(c_1,c_2) &\mapsto (9 f_1, 4f_1 + 19f_2)\,.
\end{align}
Using this isometry, we can obtain the glue vectors $(r,\psi(r))$ for the CSS lattice and finish the construction to get
\begin{align}
\Gamma = \coprod_{\substack{n \in\mathbb{Z}_{10},\\m\in\mathbb{Z}_{30}}} n(c_1, 9f_1) + m(c_2, 4f_1+19f_2) + (\mathfrak{F}_L^\perp,\mathfrak{F}_R^\perp)\,.
\end{align}

\end{document}